\pgfplotsset{compat=1.18}
\newtheorem{theorem}{Theorem}
\newtheorem{proposition}[theorem]{Proposition}
\newtheorem{corollary}[theorem]{Corollary}
\newtheorem{definition}[theorem]{Definition}
\newtheorem{example}[theorem]{Example}
\newcommand{\RR}{\ensuremath{\mathbb{R}}\xspace}
\newcommand{\NP}{{\bf(NP)}\xspace}
\newcommand{\OP}{{\bf(OP)}\xspace}
\newcommand{\TC}{{\bf(TC)}\xspace}
\newcommand{\MTC}{{\bf(MTC)}\xspace}
\newcommand{\TP}{\ensuremath{T_{\bm{P}}}\xspace}
\newcommand{\TM}{\ensuremath{T_{\bm{M}}}\xspace}
\newcommand{\ILK}{\ensuremath{I_{\bm{LK}}}\xspace}
\newcommand{\TLK}{\ensuremath{T_{\bm{LK}}}\xspace}
\newcommand{\IGD}{\ensuremath{I_{\bm{GD}}}\xspace}
\newcommand{\IGG}{\ensuremath{I_{\bm{GG}}}\xspace}
\newcommand{\IT}{\ensuremath{I^T_{\varphi,f,g}}\xspace}
\newcommand{\New}[1]{\textcolor{black}{#1}}
\begin{document}

\title{Fuzzy Implicative Rules: A Unified Approach}
\author{Raquel Fernandez-Peralta\thanks{R. Fernandez-Peralta works in the Mathematical Institute of the
				Slovak Academy of Sciences (SAS), Bratislava, Slovakia. e-mail:
				raquel.fernandez@mat.savba.sk.
				This work was funded by the EU NextGenerationEU through the Recovery
				and Resilience Plan for Slovakia under the project No. 09I03-03-V04-00557.}}

\maketitle

\begin{abstract}
Rule mining algorithms are one of the fundamental techniques in data mining for disclosing significant patterns in terms of linguistic rules expressed in natural language. In this paper, we revisit the concept of fuzzy implicative rule to provide a solid theoretical framework for any fuzzy rule mining algorithm interested in capturing patterns in terms of logical conditionals rather than the co-occurrence of antecedent and consequent. In particular, we study which properties should satisfy the fuzzy operators to ensure a coherent behavior of different quality measures. As a consequence of this study, we introduce a new property of fuzzy implication functions related to a monotone behavior of the generalized modus ponens for which we provide different valid solutions. Also, we prove that our modeling generalizes others if an adequate choice of the fuzzy implication function is made, so it can be seen as an unifying framework. We test the applicability and relevance of our framework for different real datasets and fuzzy operators.
\end{abstract}

\begin{IEEEkeywords}
Knowledge discovery in databases (KDD), 
Rule-based models, 
Implicative rules,
Association rules, 
Fuzzy logic, 
Fuzzy implication function.
\end{IEEEkeywords}

\newtheorem{interpretation}{Interpretation}
\newtheorem{explanation}{Explanation}


\section{Introduction}\label{sec:introduction}

\IEEEPARstart{D}{ata} mining or Knowledge Discovery in Databases (KDD) is defined as the automatic extraction of patterns representing knowledge implicitly stored or captured in data \cite{Fayyad1996}. Up to now, a wide variety of data mining techniques have been introduced and developed \cite{Han2012}. These techniques are usually divided into two types: exploratory and predictive. Exploratory data analysis focuses on searching relations between objects of a dataset (clustering, association rules, linguistic summaries...), whereas predictive data analysis aims to extract knowledge from discovered data with the intent to predict or classify unknown examples (classification, regression...). Within this context, rule-based algorithms have been one of the top choices for knowledge extraction because of their human-understandable output. Furthermore, with the emergence of the research area of eXplainable Artificial Intelligence (XAI) \cite{Barredo2020}, they have become even more popular since they are highly interpretable models, which are an appealing solution to provide an easily understandable representation of complex black-box models \cite{Macha2022,fumanalex2024}.

In rule-based models, the output is given in terms of rules which are usually represented as $A \Rightarrow C$ in which $A$ is called the antecedent and $C$ is called the consequent. Although in these techniques the output has the same representation, depending on the knowledge to be captured we can find several rule mining techniques, just to mention some of them: (i) \textit{association rules}: it is interested in capturing situations in which if $A$ is satisfied, then $C$ is likely to occur also \cite{Agrawal1996}; (ii) \textit{classification rules}: it aims to obtain a set of rules which effectively classify items into predefined classes \cite{Mendel2024}; (iii) \textit{sequential rules}: it searches for rules that indicate that if some event(s) occurred, some other event(s) are likely to occur \cite{Das1998}; (iv) \textit{subgroup discovery}: it aims to obtain rules with the most unusual statistical characteristics with respect to a property of interest \cite{Atzmueller2015}, etc.

Despite the representation of the rules as logical conditionals, they are usually interpreted and evaluated as the co-occurrence of antecedent and consequent. Although this assumption is quite conventional in the literature, it has also received some criticism \cite{Brin1997,Hullermeier2001}. Indeed, the discrepancy between the linguistic representation of the rules and their underlying mathematical modeling can be confusing for the expert. For instance, if one considers the rule \textit{$\text{Smoking} \Rightarrow \text{Respiratory Problems}$} it might seem unnatural that the support of this rule is not affected by its direction, because there exists a lot of respiratory problems that are not associated with smoking. An alternative to overcome this drawback is to consider a more logic-oriented approach in which the conditional in the rule is interpreted as a logical implication, in this case normally the term ``implicative rules'' is used.  In the crisp setting, the authors in \cite{Brin1997} proposed to interpret the logical conditional through a measure called ``conviction'' as an alternative to confidence which is based on conditional probability. However, in the fuzzy setting, fuzzy implication functions can be directly used. Indeed, in \cite{Hullermeier2001} the author introduced implication-based fuzzy association rules by using a t-norm for modeling conjunction, a fuzzy implication function to model the logical conditional, and the generalized modus ponens as inference schema. Nonetheless, no further development of this framework or a thorough theoretical study of the proposal was made.

Further, it is well known that the theoretical development of fuzzy operators has skyrocketed in the last two decades. Indeed, the introduction of new fuzzy operators and the study of their properties is nowadays so vast that there exist many different families from among which we can choose. For instance, only in the case of fuzzy implication functions it has been recently gathered that more than 100 families have been introduced in the literature \cite{Fernandez-Peralta2024}. This fact, although clearly positive as it indicates great flexibility, also carries with it the intrinsic problem of deciding which operator is the most suitable for each application. In this sense,  Zimmerman defines eight different criteria for an adequate selection of the involved aggregation operators in a specific model \cite{Zimmermann1991}, although authors like Mendel point out that these criteria are rather subjective to be successfully implemented in engineering applications \cite{Mendel1995}. Thus, the adequate selection of the set of operators is still a hot topic of discussion nowadays.

In this paper, we revisit the definition given in \cite{Hullermeier2001} for modeling fuzzy implicative rules and we contextualize it with respect to the advances in fuzzy operators made in recent years. We generalize the four most used quality measures (coverage, support, confidence, and weighted relative accuracy) to this new setting. Then, we study which properties the pair of operators $(T,I)$ should satisfy to obtain a coherent behavior of these measures. One of the key points of this study is to prove that for the support to be monotone with respect to the refinements of a rule, we have to introduce an additional property of fuzzy implication functions related to the generalized modus ponens. To our knowledge, this property had not been previously introduced in the literature, so we have studied it in the paper, and we have found several solutions. \New{Also, we prove that not every choice of pair of operators $(T,I)$ models directional rules; in fact, we show that the most common fuzzy implication functions are equivalent to the use of t-norms in this setting, so we propose different pairs of operators that yield information beyond what t-norms can provide.}

Further, we provide an open-source implementation in Python to mine fuzzy implicative associative rules according to the proposed framework. The implementation allows a custom selection of the fuzzy operators and fuzzy partitions. Finally, we have tested our algorithm in different publicly available databases and admissible pairs of operators. From the results, we have exposed that our perspective provides valuable knowledge which is different to that obtained by other approaches.

The structure of the paper is as follows. First, in Section \ref{sec:preliminaries} we include basic results and definitions. In Section \ref{sec:modeling} we introduce fuzzy rules modeled as conditionals, we define different quality measures and we study which properties should the fuzzy operators satisfy. In Section \ref{sec:operator} we provide solutions for the joint restrictions on the fuzzy operators. In Section \ref{sec:experiments} the experimental results are exposed. The paper ends in Section \ref{sec:conclusions} with some conclusions and future work.


\section{Preliminaries}\label{sec:preliminaries}

In this section, we provide some basic definitions and results related to fuzzy operators that are used throughout this paper. However, we assume that the reader is familiar with basic concepts of fuzzy sets, fuzzy linguistic variables, fuzzy partitions and fuzzy operators (for further information about these topics, the reader can consult \cite{Klir1995,Klement2000,Baczynski2008}).

Fuzzy conjunctions and disjunctions are defined as increasing binary functions $C,D:[0,1]^2 \to [0,1]$ such that $C(0,1)=C(1,0)=0$ and $D(0,1)=D(1,0)=1$, respectively. However, it is common to consider commutative, associative operators with neutral element which are called t-norms and t-conorms.

 \begin{definition}[\bf{\cite{Klement2000}}]
     A \emph{t-norm} (resp. \emph{t-conorm}) is a binary function which is commutative, associative, increasing in both variables and 1 is its neutral element (resp. 0 is its neutral element).
 \end{definition}

\begin{example}
    The following binary functions are t-norms:
    \begin{itemize}
        \item \emph{minimum:} $\TM(x,y)=\min\{x,y\}$.
        \item \emph{algebraic product:} $\TP(x,y)=x \cdot y$.
        \item \emph{Łukasiewicz:} $\TLK(x,y) = \max \{x+y-1,0\}$.
    \end{itemize}
\end{example}

Further, fuzzy implication functions are the generalization of the logical conditional to fuzzy logic.

\begin{definition}[\bf{\cite{Baczynski2008}}]\label{defimp}
    A binary operator $I:[0,1]^2 \to [0,1]$ is said to be a \textsl{fuzzy implication function} if it satisfies:
    \begin{description}
        \item[(I1)]  $I(x,z)\geq I(y,z)\ $  when  $\ x\leq y$, for all $z\in[0,1]$.
        \item[(I2)]  $I(x,y)\leq I(x,z)\ $  when  $\ y\leq z$, for all $x\in[0,1]$.
        \item[(I3)]  $I(0,0)=I(1,1)=1\ $ and $\ I(1,0)=0.$
\end{description}
 \end{definition}

 From Definition \ref{defimp} it is straightforward to see that if $I$ is a fuzzy implication function then $I(0, x) = I(x, 1) = 1$ for all $x \in [0, 1]$. However, the sections $I(\cdot,0)$ and $I(1,\cdot)$ are not fixed by the definition. In fact, $I(\cdot,0)$ induces a fuzzy negation $N_I$ called the natural negation of $I$.

\begin{example} The following binary functions are fuzzy implication functions:
\begin{itemize}
    \item \emph{Łukasiewicz:} $\ILK(x,y) = \min \{1,1-x+y\}$. \vspace{1mm}
    \item \emph{Goguen:}
    $
				\IGG(x,y)
				=
				\left\{\begin{array}{ll}
					1 & \text{ if } x \leq y, \\
					\frac{y}{x}& \text{ if } x>y.
				\end{array}
				\right.
				$ \vspace{1mm}
    \item \emph{G\"odel:}
    $
    \IGD(x,y)
    =
    \left\{\begin{array}{ll}
        1 & \text{ if } x \leq y, \\
        y& \text{ if } x>y.
    \end{array}
    \right.
    $ 
\end{itemize}
    
\end{example}

Usually, additional conditions on fuzzy implication functions are considered. Among those in the literature which are relevant for this paper we can find the following ones.

\begin{definition} 
Let $T$ be a t-norm and $I$ a fuzzy implication function. Then it is said that $I$ fulfills the 
\begin{itemize}
    \item \emph{left neutrality principle:} 
    $$I(1,y)=y, \quad \text{for all } y \in [0,1]. \eqno {\bf (NP)}$$
    \item \emph{ordering property:} 
    $$I(x,y)=1 \Leftrightarrow x \leq y, \quad \text{for all } x,y \in [0,1]. \eqno {\bf (OP)}$$
    \item \emph{$T$-conditionality with respect to $T$:} 
    $$T(x,I(x,y)) \leq y, \quad \text{for all } x,y \in [0,1]. \eqno {\bf (TC)}$$
    
\end{itemize}
\end{definition}

Since Definition \ref{defimp} is quite general, it allows the existence of many different families of fuzzy implication functions.  Indeed, in \cite{Fernandez-Peralta2024} more than 100 families of fuzzy implication functions were gathered. Again, we only provide here the definition of those families related to the results of this paper.

\begin{definition}[\textbf{\cite{Klir1995}}]\label{def:rimplications}
	A function $I:[0,1]^2 \to [0,1]$ is called an \emph{$R$-implication} if there exists a t-norm $T$ such that
	\begin{equation*}
	I(x,y)=\sup \{t \in [0,1] \mid T(x,t) \leq y\}, \quad x,y \in [0,1].
	\end{equation*}
	If $I$ is an $R$-implication generated from a t-norm $T$, then it is denoted by $I_T$.
\end{definition}

\begin{proposition}[\textbf{\cite[Definition 2.10]{Baczynski2016}}] Let $C$ be a copula. The function $I_C: [0,1]^2 \to [0,1]$ given by
$$I_C(x,y)
=
\left \{\begin{array}{ll} 
	1& \hbox{if } x=0,\\
	\frac{C(x,y)}{x} &\hbox{if } x>0,
\end{array}\right.
$$
is a fuzzy implication function if and only if
\begin{equation}\label{eq:condition_copula}
C(x_1,y)x_2 \geq C(x_2,y)x_1,
\end{equation}
for all $x_1 \leq x_2 \text{ and } y \in [0,1]$. In this case, $I_C$ is called a \emph{probabilistic implication}.
\end{proposition}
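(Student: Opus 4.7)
The plan is a direct verification that $I_C$ satisfies the three defining axioms of a fuzzy implication function, treating the stated inequality as exactly the content of axiom \Ione.

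First, I would record the copula facts I need: any copula is increasing in each argument (a consequence of 2-increasingness together with $C(\cdot,0)\equiv 0$ and $C(0,\cdot)\equiv 0$), and satisfies $C(x,1)=x$, $C(1,y)=y$, $C(x,0)=C(0,y)=0$. As a preliminary sanity check, I would verify that $I_C$ really maps into $[0,1]$: for $x>0$ we have $0\le C(x,y)\le C(x,1)=x$, so $0\le C(x,y)/x\le 1$; and for $x=0$ the definition sets the value to $1$. Axiom \Ithree then falls out immediately, since $I_C(0,0)=1$ by definition, $I_C(1,1)=C(1,1)/1=1$, and $I_C(1,0)=C(1,0)/1=0$.

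Next I would verify \Itwo unconditionally: for fixed $x$, if $y\le z$ then monotonicity of $C$ in the second slot gives $C(x,y)\le C(x,z)$, and dividing by $x>0$ preserves the inequality (the $x=0$ case being trivial because $I_C(0,y)=I_C(0,z)=1$). Nothing beyond the copula axioms is needed here.

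The only axiom that ties to the stated hypothesis is \Ione. I would fix $x_1\le x_2$ and $z\in[0,1]$ and split cases. If $x_1=0$, then $I_C(x_1,z)=1\ge I_C(x_2,z)$ trivially. If $0<x_1\le x_2$, then \Ione reads $C(x_1,z)/x_1\ge C(x_2,z)/x_2$, which after clearing denominators is precisely
\[
C(x_1,z)\,x_2\ \ge\ C(x_2,z)\,x_1.
\]
This is the stated condition, so \Ione holds if and only if the condition holds (the boundary case $x_1=0$ is automatically included, since both sides of the cross-multiplied inequality are zero there). Combining the three axioms gives the equivalence.

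There is no real obstacle here; the whole proof is bookkeeping once one notices that the characterizing inequality is exactly \Ione after clearing the denominator $x_1 x_2$. The only mildly subtle point worth flagging in the write-up is that the condition must be phrased in the cross-multiplied form $C(x_1,y)x_2\ge C(x_2,y)x_1$ rather than as a decreasing-quotient statement, precisely so that it continues to make sense (and is automatically satisfied) at $x_1=0$.
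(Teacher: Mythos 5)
Your verification is correct: \Itwo and \Ithree hold for any copula, the range check via $C(x,y)\le C(x,1)=x$ is right, and \Ione is exactly the cross-multiplied inequality (with the $x_1=0$ boundary case handled as you note). The paper itself gives no proof of this proposition --- it is imported verbatim from the cited reference --- and your direct axiom-by-axiom check is the standard argument one would find there, so there is nothing to contrast.
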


\begin{definition}[\bf{\cite[Definition 8]{Zhou2021}}] 
	Let $k:[0,1]\to[0,1]$ be a strictly increasing and continuous function with $k(1)=1$. The function $I_k : [0,1]^2 \to [0,1]$ defined by
	$$I_k(x,y)=k^{(-1)}\left( \frac{1}{x} \cdot k(y)\right), \quad x,y\in[0,1],$$ 
	with the understanding $\frac{0}{0}=1$ and $\frac{1}{0}=+\infty$ and where $k^{(-1)}$ is the pseudo-inverse of $k$, is called a \emph{$k$-generated implication}. The function $k$ itself is called a \emph{$k$-generator}.
\end{definition}


\section{Fuzzy implicative rules and quality measures}\label{sec:modeling}

In this section, we recall the modeling of fuzzy implicative rules proposed in \cite{Hullermeier2001} and we study which properties should the pair $(T,I)$ satisfy in order to obtain the desired behavior. Also, we provide the generalization of four of the most important quality measures (coverage, support, confidence, and weighted relative accuracy). Finally, we prove that our framework can be seen as the generalization of the crisp setting and fuzzy conjunctive rules proposals.

\subsection{Rule's Modeling}\label{subsection:rules_modeling}

We consider a set of $n_f \in \mathbb{N}$ features $ \{X_m \mid m=1,\dots,n_f\}$, each with domain $D_m \subseteq \mathbb{R}$ and a target variable $Y$ with domain $D_Y \subseteq \mathbb{R}$. These variables can be categorical or numerical (including the target variable). A set of $n_e \in \mathbb{N}$ examples  $E=\{E^d = (e_1^d, \dots, e_{n_f}^d,y^d) \mid d=1,\dots,n_e\}$.
For each feature $X_m \in \{X_1,\dots,X_{n_f}\}$ we consider $l_m \in \mathbb{N}$ linguistic labels
$X_m: \{LL_m^1, \dots, LL_m^{l_m}\}$.
For each feature $X_m \in \{X_1,\dots,X_{n_f}\}$ and linguistic label $LL_{m}^n$, $n \in \{1,\dots,l_m\}$  we consider a membership function $\mu_{LL_{m}^n}:D_m \to [0,1]$. Moreover, for the target variable we consider $n_c \in \mathbb{N}$ linguistic labels that we call ``classes''
$Y:\{\text{Class}_1, \dots, \text{Class}_{n_c}\}$.
For each $\text{Class}_j$, $j \in \{1,\dots,n_c\}$ we consider a membership function $\mu_{Class_j}:D_Y \to [0,1]$.

For simplicity, we consider rules as the conjunction of literals, for other modelings an analogous study could be made. In this case, to determine a rule we fix a consequent $Class_j$ and a subset of features $S=\{X_{m_1},\dots,X_{m_s}\} \subseteq \{X_1,\dots,X_{n_f}\}$ that will be considered in the antecedent and $L=\{LL^{n_{m_1}}_{m_1},\dots, LL^{n_{m_s}}_{m_s}\}$ the corresponding linguistic labels  with $n_{m_i} \in \{1,\dots,l_m\}$ for all $X_m \in S$, i.e., only a linguistic label per considered feature in the rule. Then, a rule can be expressed as
\begin{eqnarray*}\label{eq:GenericLinguisticRule}
R^{S,L}_j : \text{IF } (X_{m_1} \text{ IS } LL_{m_1}^{n_{m_1}} \text{ AND } \dots  \text{ AND } X_{m_s} \text{ IS } LL_{m_s}^{n_{m_s}}) \\ \nonumber
\text{ THEN } Y \text{ IS } \text{Class}_j.
\end{eqnarray*}

Now, to model the corresponding conjunction and conditional we consider a t-norm $T:[0,1]^2 \to [0,1]$ and a fuzzy implication function $I:[0,1]^2 \to [0,1]$, respectively. We will denote by $\mathcal{R}^{I,T}$ the set of all the rules where the conjunction is modeled by a t-norm $T$ and the conditional by a fuzzy implication function $I$, and by $\mathcal{S}(\mathcal{R}^{I,T})$ the set of all possible subsets of $\mathcal{R}^{I,T}$. Besides, we will denote by $\mathcal{R}^{I,T}_j$ the subset of $\mathcal{R}^{I,T}$ obtained by fixing Class$_{j}$ as the consequent and by $\mathcal{S}(\mathcal{R}_j^{I,T})$ the set of all possible subsets of $\mathcal{R}^{I,T}_j$.

Now, given a rule $R^{S,L}_j \in \mathcal{R}^{I,T}$ expressed as in Eq. (\ref{eq:GenericLinguisticRule}) and an example $E^d \in E$ we define:
\begin{itemize}
	\item The \emph{truth value of the antecedent} of the rule $R^{S,L}_j \in \mathcal{R}^{I,T}$ evaluated on example $E^d$ as
	$$\mu_{ant}^{R^{S,L}_j,E^d}=T\left(\mu_{LL_{m_1}^{n_{m_1}}}(e_{m_1}^d), \dots, \mu_{LL_{m_s}^{n_{m_s}}}(e_{m_s}^d) \right).$$
	\item The \emph{truth value of the consequent} of the rule $R^{S,L}_j \in \mathcal{R}^{I,T}$ evaluated on example $E^d$ as
	$$\mu_{con}^{R^{S,L}_j,E^d}=\mu_{Class_j}(y^d).$$
	\item The \emph{truth value of the rule} $R^{S,L}_j \in \mathcal{R}^{I,T}$ evaluated on example $E^d$ as
	$$\mu_{rule}^{R^{S,L}_j,E^d}=I(\mu_{ant}^{R^{S,L}_j,E^d},\mu_{con}^{R^{S,L}_j,E^d}).$$
	\item The \emph{truth value of the evaluation} of the rule $R^{S,L}_j \in \mathcal{R}^{I,T}$ evaluated on example $E^d$ as
	\begin{eqnarray*}
	\mu_{eval}^{R^{S,L}_j,E^d} &=&T(\mu_{ant}^{R^{S,L}_j,E^d},\mu_{rule}^{R^{S,L}_j,E^d}) \\
    &=& T(\mu_{ant}^{R^{S,L}_j,E^d},I(\mu_{ant}^{R^{S,L}_j,E^d},\mu_{con}^{R^{S,L}_j,E^d})).
	\end{eqnarray*}
\end{itemize}

The properties of the fuzzy operators are of the utmost importance for the correct behavior of the rule mining technique \cite{Burda2015}. If we do not pay attention to this issue we may encounter inconsistencies in the corresponding model. In accordance, we next discuss some desirable properties of fuzzy implicative rules and we compare them with the crisp case. First of all, let us introduce the concept of a refinement of a fuzzy rule as a new rule, whose antecedent has more restrictions.

\begin{definition}\label{def:SDrefinements}
	Let $I$ be a fuzzy implication function, $T$ a t-norm and $R^{\tilde{S},\tilde{L}}_{\tilde{j}}$, $R^{S,L}_j \in \mathcal{R}^{I,T}$. We say that $R^{\tilde{S},\tilde{L}}_{\tilde{j}}$ is a \emph{refinement} of $R^{S,L}_j$ if and only if $\tilde{j}=j$, $ S \subsetneq \tilde{S} \text{ and } LL_{m}^{n_{m}} \in \tilde{L}$ for all $X_{m} \in S$. In this case, we denote it by $R^{S,L}_j \prec R^{\tilde{S},\tilde{L}}_{\tilde{j}}$.
\end{definition}

In the crisp setting, the number of examples that fulfill the conditions in the antecedent is less in any refinement than in the original rule. In the fuzzy case this property is ensured by the monotonicity of the t-norm, and it is interpreted as follows: if $R^{\tilde{S},\tilde{L}}_{\tilde{j}}$ is a refinement of $R^{S,L}_j$, then for any example $E^d$ the truth value when evaluating it in the antecedent of $R^{\tilde{S},\tilde{L}}_{\tilde{j}}$ is smaller than in the antecedent of $R^{S,L}_j$.

\begin{proposition}\label{prop:SD:MonotonicityANT} 
Let $I$ be a fuzzy implication function, $T$ a t-norm and $Class_j \in \{Class_1,\dots,Class_{n_c}\}$, then
	$$ \mu_{ant}^{R^{\tilde{S},\tilde{L}}_{j},E^d} \leq \mu_{ant}^{R^{S,L}_j,E^d},
    $$
    for all $R^{S,L}_j \prec R^{\tilde{S},\tilde{L}}_{j},~ R^{S,L}_j,R^{\tilde{S},\tilde{L}}_{j}  \in \mathcal{R}^{I,T}_j \text{ and } E^d \in E.
	$ Thus, $\displaystyle \sum_{d=1}^{n_e}\mu_{ant}^{R^{\tilde{S},\tilde{L}}_{j},E^d} \leq \sum_{d=1}^{n_e} \mu_{ant}^{R^{S,L}_j,E^d}$.
\end{proposition}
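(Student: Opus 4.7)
The plan is to unfold the definition of $\mu_{ant}^{R^{S,L}_j,E^d}$ and exploit the fact that a refinement simply adds more literals to the antecedent, so that its truth value is the $T$-conjunction of the original truth value with some additional terms in $[0,1]$. Since $T$ is a t-norm, it is increasing and has $1$ as neutral element, and this is enough to bound the refinement's antecedent by the original antecedent.

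More concretely, first I would fix $R^{S,L}_j \prec R^{\tilde S,\tilde L}_j$ and write, say, $S=\{X_{m_1},\dots,X_{m_s}\}$ and $\tilde S = S \cup \{X_{m_{s+1}},\dots,X_{m_{s+k}}\}$ with $k\geq 1$, where by Definition \ref{def:SDrefinements} the linguistic labels assigned in $\tilde L$ to the features in $S$ coincide with those in $L$. Using the associativity and commutativity of $T$, the antecedent of the refinement on $E^d$ can then be rewritten as
\begin{equation*}
\mu_{ant}^{R^{\tilde S,\tilde L}_j,E^d} = T\!\left(\mu_{ant}^{R^{S,L}_j,E^d},\, T\!\left(\mu_{LL_{m_{s+1}}^{n_{m_{s+1}}}}(e_{m_{s+1}}^d),\dots,\mu_{LL_{m_{s+k}}^{n_{m_{s+k}}}}(e_{m_{s+k}}^d)\right)\right).
\end{equation*}
Denoting by $b\in[0,1]$ the inner $T$-conjunction, monotonicity of $T$ in the second variable together with $T(a,1)=a$ for all $a\in[0,1]$ gives
\begin{equation*}
\mu_{ant}^{R^{\tilde S,\tilde L}_j,E^d} = T\!\left(\mu_{ant}^{R^{S,L}_j,E^d},b\right) \leq T\!\left(\mu_{ant}^{R^{S,L}_j,E^d},1\right) = \mu_{ant}^{R^{S,L}_j,E^d},
\end{equation*}
which is the pointwise inequality claimed.

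Finally, to obtain the second assertion I would simply sum the pointwise inequality over $d=1,\dots,n_e$ and use that the sum of non-negative quantities preserves the order. There is no real obstacle here: the argument is essentially the observation that t-norms are subidempotent in the sense $T(a,b)\leq a$, so the statement is a direct consequence of the axioms of a t-norm and the way refinements are defined; no property of the fuzzy implication function $I$ is needed at this stage, which is consistent with the fact that $I$ does not appear in the definition of $\mu_{ant}$.
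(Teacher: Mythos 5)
Your proof is correct and follows essentially the same route as the paper: the paper likewise writes the refined antecedent as the $T$-conjunction of the original literals with the added ones and invokes monotonicity of $T$ (equivalently, $T(a,b)\leq T(a,1)=a$) to conclude. Your version just makes the associativity/commutativity regrouping and the final summation step explicit.
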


\begin{proof} Without loss of generality let us consider $S= \{X_{m_1},\dots,X_{m_s}\}$, $L = \{LL_{m_{1}}^{n_{m_{1}}}, \dots LL_{m_{s}}^{n_{m_{s}}} \}$   $\tilde{S}\setminus S = \{X_{m_{s+1}},\dots,X_{m_r}\}$ and $\tilde{L} \setminus L = \{LL_{m_{s+1}}^{n_{m_{s+1}}}, \dots LL_{m_{r}}^{n_{m_{r}}} \}$. By the monotonicity of the t-norm $T$ we have
\begin{eqnarray*}
	\mu_{ant}^{R^{\tilde{S},\tilde{L}}_{j},E^d} &=& T\left(\mu_{LL_{m_1}^{n_{m_1}}}(e_{m_1}^d), \dots, \mu_{LL_{m_s}^{n_{m_s}}}(e_{m_s}^d), \right.\\
 & & \left. \mu_{LL_{m_{s+1}}^{n_{m_{s+1}}}}(e_{m_{s+1}}^d), \dots, \mu_{LL_{m_r}^{n_{m_r}}}(e_{m_r}^d) \right) \\
		&\leq&   T\left(\mu_{LL_{m_1}^{n_{m_1}}}(e_{m_1}^d), \dots, \mu_{LL_{m_s}^{n_{m_s}}}(e_{m_s}^d)  \right) \\
    &=& \mu_{ant}^{R^{S,L}_j,E^d}. \hspace{4cm}\qedhere
\end{eqnarray*}
\end{proof}
On the other hand, in a crisp rule the number of examples that fulfill the conditions in the antecedent of a rule and also belong to the target class is smaller than number of the examples that belong to the target class. In the case of fuzzy rules the analogous fact is not straightforward. Since we have generalized the concept of fulfilling the antecedent and belonging to the target class in terms of the generalized modus ponens, we have to impose the $T$-conditionality to ensure that the truth value when evaluating an example in a rule is smaller than the membership degree of the example to the consequent.

\begin{proposition}
	Let $I$ be a fuzzy implication function and $T$ a t-norm. If $I$ satisfies \TC with respect to $T$ then
	$$ \mu_{eval}^{R^{S,L}_{j},E^d} \leq \mu_{con}^{R^{S,L}_j,E^d}, \quad \text{for all }  R^{S,L}_j \in \mathcal{R}^{I,T} \text{ and } E^d \in E.
	$$
	Thus, $\displaystyle \sum_{d=1}^{n_e}\mu_{eval}^{R^{S,L}_j,E^d} \leq \sum_{d=1}^{n_e}\mu_{con}^{R^{S,L}_j,E^d}$. 
\end{proposition}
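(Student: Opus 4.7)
The plan is to observe that the statement follows almost immediately from the definition of $\mu_{eval}$ and the $T$-conditionality property (TC), so the only real work is a clean specialization of the inequality and an additive step for the sum.

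First I would unpack the definition of the truth value of the evaluation, namely
\begin{equation*}
\mu_{eval}^{R^{S,L}_j,E^d} = T\bigl(\mu_{ant}^{R^{S,L}_j,E^d},\, I(\mu_{ant}^{R^{S,L}_j,E^d},\mu_{con}^{R^{S,L}_j,E^d})\bigr),
\end{equation*}
so the pointwise inequality to be proven takes the form $T(x, I(x,y)) \le y$ with the substitutions $x := \mu_{ant}^{R^{S,L}_j,E^d} \in [0,1]$ and $y := \mu_{con}^{R^{S,L}_j,E^d} \in [0,1]$. Since both values lie in $[0,1]$ and (TC) is assumed to hold for all pairs in $[0,1]^2$, applying it with these specific values gives exactly the desired bound.

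For the second assertion, I would argue by pointwise summation: the inequality $\mu_{eval}^{R^{S,L}_{j},E^d} \leq \mu_{con}^{R^{S,L}_j,E^d}$ holds for every $d \in \{1,\dots,n_e\}$, so summing over $d$ preserves the inequality and yields $\sum_{d=1}^{n_e} \mu_{eval}^{R^{S,L}_j,E^d} \leq \sum_{d=1}^{n_e} \mu_{con}^{R^{S,L}_j,E^d}$. No further properties of $T$ or $I$ (beyond (TC)) are needed.

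There is essentially no obstacle here — the whole content of the proposition is a direct translation of the (TC) axiom into the language of rule evaluation; this is why the proposition serves as a motivation for requiring (TC) rather than as a deep technical result. The main conceptual point to emphasize in the writeup is that (TC) is precisely the abstract counterpart of the crisp statement that the examples fulfilling both antecedent and consequent are a subset of those fulfilling the consequent, which is exactly what the paragraph preceding the proposition is trying to justify.
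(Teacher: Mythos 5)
Your proposal is correct and matches the paper's proof exactly: the paper likewise unpacks $\mu_{eval}^{R^{S,L}_j,E^d}$ as $T(\mu_{ant}^{R^{S,L}_j,E^d},I(\mu_{ant}^{R^{S,L}_j,E^d},\mu_{con}^{R^{S,L}_j,E^d}))$ and applies \TC directly, with the summed inequality following immediately. Nothing is missing.
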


\begin{proof} By \TC we have
	$$\mu_{eval}^{R^{S,L}_{j},E^d} = T(\mu_{ant}^{R^{S,L}_j,E^d},I(\mu_{ant}^{R^{S,L}_j,E^d},\mu_{con}^{R^{S,L}_j,E^d})) \leq \mu_{con}^{R^{S,L}_j,E^d}.$$
\end{proof}
Besides, in crisp rules the number of examples that fulfill the conditions in the antecedent of a rule and also belong to the target class is smaller in any refinement than in the original rule. This is because in any refinement there are more conditions in the antecedent, so it is more complex.  Again, in the case of fuzzy rules we do not generally have the analogous property. However, differently from the previous case, in order to obtain the desired property for fuzzy rules we have to impose an additional property of fuzzy implication functions for which we could not find any study about it in the consulted bibliography. This property captures the intuitive idea that the truth value of the inference obtained by applying the generalized modus ponens should be decreasing with respect to the truth value of the antecedent.

\begin{definition}\label{def:(MTC)}
	Let $I$ be a fuzzy implication function and $T$ a t-norm. We say that $I$ satisfies the \emph{monotonicity of the generalized modus ponens} with respect to $T$ if and only if
	$$T(\tilde{x},I(\tilde{x},y)) \leq T(x,I(x,y)), \quad x,\tilde{x},y \in [0,1] \text{ with } \tilde{x} \leq x.  \eqno {\text{\MTC}} $$
\end{definition}

Now, if we impose \MTC we can ensure that the truth value of the evaluation of an example is smaller in any refinement than in the original rule.

\begin{proposition}\label{prop:SD:MonotonicityEVAL}
	Let $I$ be a fuzzy implication function, $T$ a t-norm and $Class_j \in \{Class_1,\dots,Class_{n_c}\}$. If $I$ satisfies \MTC with respect to $T$ then
	$$ \mu_{eval}^{R^{\tilde{S},\tilde{L}}_{j},E^d} \leq \mu_{eval}^{R^{S,L}_j,E^d},
    $$
    $\text{for all }  R^{S,L}_j \prec R^{\tilde{S},\tilde{L}}_{j},~ R^{S,L}_j,R^{\tilde{S},\tilde{L}}_{j}  \in \mathcal{R}^{I,T}_j \text{ and } E^d \in E.$ 
	Thus, $\displaystyle \sum_{d=1}^{n_e}\mu_{eval}^{R^{\tilde{S},\tilde{L}}_{j},E^d} \leq \sum_{d=1}^{n_e} \mu_{eval}^{R^{S,L}_j,E^d}$. 
\end{proposition}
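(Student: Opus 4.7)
The plan is to reduce the desired inequality directly to an instance of the property \MTC. First I would observe that, by Definition \ref{def:SDrefinements}, a refinement preserves the consequent class ($\tilde{j}=j$), so the truth value of the consequent does not change, i.e.\
$$\mu_{con}^{R^{\tilde{S},\tilde{L}}_{j},E^d} = \mu_{Class_j}(y^d) = \mu_{con}^{R^{S,L}_j,E^d}.$$
Let me denote this common value by $y$.

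Next I would apply Proposition \ref{prop:SD:MonotonicityANT} to the antecedents, which yields $\mu_{ant}^{R^{\tilde{S},\tilde{L}}_{j},E^d} \leq \mu_{ant}^{R^{S,L}_j,E^d}$. Setting $\tilde{x} := \mu_{ant}^{R^{\tilde{S},\tilde{L}}_{j},E^d}$ and $x := \mu_{ant}^{R^{S,L}_j,E^d}$, the required inequality $\mu_{eval}^{R^{\tilde{S},\tilde{L}}_{j},E^d} \leq \mu_{eval}^{R^{S,L}_j,E^d}$ unfolds via the definition of $\mu_{eval}$ into exactly
$$T(\tilde{x}, I(\tilde{x}, y)) \leq T(x, I(x, y)), \qquad \tilde{x} \leq x,$$
which is precisely the statement of \MTC invoked at the triple $(x,\tilde{x},y)$. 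The aggregated bound on $\sum_{d=1}^{n_e} \mu_{eval}^{R^{\tilde{S},\tilde{L}}_{j},E^d}$ then follows by summing the pointwise inequality over all examples $E^d \in E$.

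There is no substantive obstacle here: the proposition is effectively a restatement of \MTC in the language of rules, and the role of the proof is simply to verify that the hypotheses align the quantities $\tilde{x}$, $x$, $y$ appropriately. The only point worth stating explicitly in the write-up is the equality of consequents (which uses $\tilde{j}=j$ from the definition of a refinement) so that a \emph{single} value $y$ appears in both applications of $I$; without this observation the argument could not be framed as a direct instance of \MTC.
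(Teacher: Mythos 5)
Your proposal is correct and follows exactly the same route as the paper's own proof: equality of consequents from $\tilde{j}=j$, Proposition \ref{prop:SD:MonotonicityANT} for the antecedents, and then a direct application of \MTC, with the summed inequality following termwise. No differences worth noting.
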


\begin{proof}
	Let us consider  $R^{S,L}_j \prec R^{\tilde{S},\tilde{L}}_{j}$, by Proposition \ref{prop:SD:MonotonicityANT} we know that $\mu_{ant}^{R^{\tilde{S},\tilde{L}}_{j},E^d} \leq \mu_{ant}^{R^{S,L}_j,E^d}$, and since the two rules consider the same target class we have $\mu_{con}^{R^{\tilde{S},\tilde{L}}_{j}} = \mu_{con}^{R^{S,L}_j,E^d}$. Thus, by \MTC we obtain
	\begin{eqnarray*}
		\mu_{eval}^{R^{\tilde{S},\tilde{L}}_{j},E^d} &=&  T(\mu_{ant}^{R^{\tilde{S},\tilde{L}}_{j},E^d},I(\mu_{ant}^{R^{\tilde{S},\tilde{L}}_{j},E^d},\mu_{con}^{R^{S,L}_j,E^d})) \\
		& \leq & T(\mu_{ant}^{R^{S,L}_j,E^d},I(\mu_{ant}^{R^{S,L}_j,E^d},\mu_{con}^{R^{S,L}_j,E^d})) \\
		& = & \mu_{eval}^{R^{S,L}_j,E^d}.
	\end{eqnarray*}
\end{proof}

Notice that the property \MTC was not considered in \cite{Hullermeier2001}, so in that approach it was not guaranteed that the evaluation of a rule was monotone with respect to our definition of refinement of a rule (see Definition \ref{def:SDrefinements}).

Finally, in view of the discussion above, we may conclude that the pair of operators $(T,I)$ should satisfy \TC and \MTC in order to behave adequately when used for fuzzy implicative rule mining.

\begin{definition}\label{def:adequate_pair}
	Let $I$ be a fuzzy implication function and $T$ a t-norm. We say that the pair $(T,I)$ is \emph{adequate for fuzzy implicative rule mining} if $I$ fulfills \TC and \MTC with respect to $T$. 
\end{definition}

In the subsequent sections, we prove that these two properties are not only necessary for the reasons pointed out above, but are also crucial for ensuring other desired properties.

\subsection{Fuzzy implicative rules as a generalization of others}\label{subsection:generalization}

Let us point out that the interpretation of the logical conditional relies completely on the selection of the corresponding fuzzy implication function. Thus, our framework is very flexible. Indeed, in this section we prove that the rule's modeling proposed in Section \ref{subsection:rules_modeling} can be seen as the generalization of the crisp setting and other fuzzy logic perspectives based on conjunctive fuzzy rules. 

In the crisp setting the target variable is categorical and, in this case, we can construct a bijection between the domain of the target variable $D_Y$ with $|D_Y|=n_c$ and the set $\{1,\dots,n_c\}$, so let us assume $D_Y=\{1,\dots,n_c\}$. In this situation, it is clear that we are forced to construct a fuzzy set with a singleton membership function for each possible value of $D_Y$, so for all $j \in D_Y$ we consider $Class_j=j$ and
$$   
\begin{array}{rcl}
	\mu_{Class_j}:D_Y&\longrightarrow&\{0,1\}\\
	\tilde{j}&\longmapsto& \left\{ \begin{array}{ll}
		1 &   \text{if }   \tilde{j}=j, \\
		0 &   \text{if }   \tilde{j}\not = j.
	\end{array}
	\right.
\end{array}
$$
Then, it is obvious that when evaluating a rule for a certain example the only possible values for the consequent are $0$ or $1$, so the only values of the fuzzy implication function that are being used are $I(x,1)$ and $I(x,0)$ for all $x \in [0,1]$. Since $I(x,1)=1$ for all $x \in [0,1]$, only the choice of the natural negation $N_I$ plays a role in this framework. However, if we take into account that in the end we are interested in the evaluation of the rule, if we choose a fuzzy implication function $I$ which satisfies \TC with respect to $T$ we have that
$$T(x,I(x,0)) \leq 0 \Rightarrow T(x,I(x,0)) = 0.$$
Thus, when the consequent is zero the evaluation of the rule is also zero independently from the natural negation of the corresponding fuzzy implication function. In accordance, when a singleton membership function for the target variable is considered, the role of the fuzzy implication function disappears and the truth value of the antecedent of the rule is considered as the truth value of the evaluation of the rule, whenever the consequent is non-zero. Then, by imposing the restriction $\TC$ we generalize other perspectives of fuzzy rules where neither numeric targets nor fuzzy implication functions were considered (see for instance \cite{delJesus2007}).
\begin{proposition}
	Let $I$ be a fuzzy implication function, $T$ a t-norm and \linebreak $Class_j \in \{Class_1,\dots,Class_{n_c}\}$ such that $\mu_{Class_j} : D_Y \to \{0,1\}$. If $I$ satisfies \TC with respect to $T$ then $\mu_{con}^{R^{S,L}_j,E^d} \in \{0,1\}$ and 
	$$ \mu_{rule}^{R^{S,L}_j,E^d} =  			\left\{ \begin{array}{ll}
		N_I(\mu_{ant}^{R^{S,L}_j,E^d}) &   \text{if }   \mu_{con}^{R^{S,L}_j,E^d}=0, \\[1em]
		1 & \text{if } \mu_{con}^{R^{S,L}_j,E^d}=1, 
	\end{array} \right.
	$$
	$$
	\mu_{eval}^{R^{S,L}_j,E^d} = \left\{ \begin{array}{ll}
		0 &   \text{if }   \mu_{con}^{R^{S,L}_j,E^d}=0 , \\[1em]
		\mu_{ant}^{R^{S,L}_j,E^d} & \text{if } \mu_{con}^{R^{S,L}_j,E^d}=1,
	\end{array} \right.$$
for all $R^{S,L}_j \in \mathcal{R}^{I,T}_j \text{ and } E^d \in E$.
\end{proposition}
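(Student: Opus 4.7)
The plan is to prove the statement by a direct case split on the value of $\mu_{con}^{R^{S,L}_j,E^d}$, which is forced to live in $\{0,1\}$ by the hypothesis that the consequent has a crisp membership function. Concretely, since $\mu_{con}^{R^{S,L}_j,E^d}=\mu_{Class_j}(y^d)$ and by assumption $\mu_{Class_j}:D_Y\to\{0,1\}$, the first claim is immediate, and it remains to evaluate both $\mu_{rule}$ and $\mu_{eval}$ in each of the two possible cases.

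First I would handle the case $\mu_{con}^{R^{S,L}_j,E^d}=0$. For $\mu_{rule}$, applying the definition gives $I(\mu_{ant}^{R^{S,L}_j,E^d},0)$, which is exactly $N_I(\mu_{ant}^{R^{S,L}_j,E^d})$ by the definition of the natural negation of a fuzzy implication function recalled right after Definition \ref{defimp}. For $\mu_{eval}$, I would invoke \TC to write $T(\mu_{ant}^{R^{S,L}_j,E^d},I(\mu_{ant}^{R^{S,L}_j,E^d},0))\leq 0$, and since the t-norm takes values in $[0,1]$, equality with $0$ follows.

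Next I would handle the case $\mu_{con}^{R^{S,L}_j,E^d}=1$. Here $\mu_{rule}=I(\mu_{ant}^{R^{S,L}_j,E^d},1)=1$, which is one of the boundary identities that follow directly from \Ithree together with \Itwo (explicitly noted in the paragraph after Definition \ref{defimp}). Then $\mu_{eval}=T(\mu_{ant}^{R^{S,L}_j,E^d},1)=\mu_{ant}^{R^{S,L}_j,E^d}$, using that $1$ is the neutral element of the t-norm $T$.

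There is no real obstacle here: the whole argument is a two-line case analysis, the only subtle point being to notice that the role of \TC in the excerpt is just to collapse the $\mu_{con}=0$ case of $\mu_{eval}$ to zero regardless of how $N_I$ is chosen — this is what justifies the claim, made in the surrounding discussion, that the fuzzy implication function becomes irrelevant once the target is crisp, recovering the conjunctive-rule framework used for instance in \cite{delJesus2007}.
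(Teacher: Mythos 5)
Your proof is correct and follows exactly the reasoning the paper intends: the paper's own proof is simply ``Straightforward,'' and the case analysis you give (natural negation and \TC forcing $\mu_{eval}=0$ when the consequent is $0$; the identities $I(x,1)=1$ and the neutral element of $T$ when it is $1$) is precisely the argument already sketched in the discussion preceding the proposition. Nothing is missing.
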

\begin{proof}
	Straightforward.
\end{proof}
On the other hand, there are other fuzzy logic perspectives that consider fuzzy linguistic variables also for the consequent of the rules but the evaluation of the rules is performed by using a t-norm \cite{Burda2015}, i.e., $\mu_{eval}^{R^{S,L}_j,E^d} = T(\mu_{ant}^{R^{S,L}_j,E^d},\mu_{con}^{R^{S,L}_j,E^d})$. In this case, it is clear that if we select the following fuzzy implication function
\begin{equation}\label{eq:y_implication}
I_Y(x,y) = \left\{ \begin{array}{ll}
1 &   \text{if }   x=0 \text{ or } y = 1, \\[1em]
y & \text{otherwise},
\end{array} \right.
\end{equation}
then our approach is equivalent.

\begin{proposition}\label{prop:equivalence_conjunctive}
	Let $T$ be a t-norm and $I_Y$ the fuzzy implication function in Eq. (\ref{eq:y_implication}). Then, $I_Y$ satisfies \TC and \MTC with respect to $T$ and 
	$$
	\mu_{eval}^{R^{S,L}_j,E^d} = T(\mu_{ant}^{R^{S,L}_j,E^d},\mu_{con}^{R^{S,L}_j,E^d}),$$
for all $R^{S,L}_j \in \mathcal{R}^{I,T}_j \text{ and } E^d \in E$.
\end{proposition}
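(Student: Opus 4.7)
The plan is to split the statement into three independent verifications: (i) $I_Y$ satisfies \TC with respect to any t-norm $T$, (ii) $I_Y$ satisfies \MTC with respect to any t-norm $T$, and (iii) the evaluation $\mu_{eval}^{R^{S,L}_j,E^d}$ collapses to $T(\mu_{ant}^{R^{S,L}_j,E^d},\mu_{con}^{R^{S,L}_j,E^d})$. All three reduce to case analysis driven by the two clauses in the definition of $I_Y$, namely the branch $x=0$ or $y=1$ (where $I_Y$ equals $1$) versus the generic branch (where $I_Y(x,y)=y$).

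For (i), given $x,y\in[0,1]$, I would distinguish the boundary cases and the generic case. If $x=0$, then $T(0,I_Y(0,y))=T(0,1)=0\leq y$; if $y=1$, then $T(x,I_Y(x,1))=T(x,1)=x\leq 1=y$; otherwise $I_Y(x,y)=y$ and the desired inequality $T(x,y)\leq y$ follows from $T(x,y)\leq T(1,y)=y$ by monotonicity and the neutral element of $T$. For (ii), fix $\tilde{x}\leq x$ and split on $(\tilde{x},y)$ in the same fashion. If $y=1$, both sides reduce to the identity $\tilde{x}\leq x$. If $\tilde{x}=0$, the left-hand side is $T(0,1)=0$, so the inequality is immediate. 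The only remaining case is $\tilde{x}>0$ and $y<1$; then $x\geq \tilde{x}>0$, so $I_Y(\tilde{x},y)=I_Y(x,y)=y$, and \MTC reduces to $T(\tilde{x},y)\leq T(x,y)$, which is again monotonicity of $T$.

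For (iii), abbreviating $a=\mu_{ant}^{R^{S,L}_j,E^d}$ and $c=\mu_{con}^{R^{S,L}_j,E^d}$, I need $T(a,I_Y(a,c))=T(a,c)$. The same trichotomy applies: if $a=0$, both sides equal $0$; if $c=1$, both sides equal $a$; otherwise $I_Y(a,c)=c$ and the equality is a tautology. Since $a$ and $c$ are arbitrary elements of $[0,1]$, this yields the claimed identity for every rule and every example.

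No part of this argument is genuinely deep; the only obstacle is clerical, namely ensuring that the three cases exhaust $[0,1]^2$ and that the boundary situations ($a=0$ together with $c=1$, or $\tilde{x}=0$ together with $y=1$) are handled consistently by all clauses, which is automatic since $I_Y$ returns $1$ throughout that overlap and the conclusions in each case are compatible.
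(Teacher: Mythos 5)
Your proof is correct and rests on the same case analysis as the paper's: the paper simply establishes the single identity $T(x,I_Y(x,y))=T(x,y)$ by the same trichotomy ($x=0$, $y=1$, generic) and lets \TC and \MTC follow from $T(x,y)\leq y$ and the monotonicity of $T$, whereas you run the trichotomy separately for each of the three claims. The content is identical; the paper's version is just the more economical packaging.
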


\begin{proof}
    Let $T$ be a t-norm and $I_Y$ the fuzzy implication function Eq. (\ref{eq:y_implication}), then
    \begin{eqnarray*}
    T(x,I_Y(x,y))
    &=&
    \left\{ \begin{array}{ll}
    x &   \text{if }   x=0 \text{ or } y = 1, \\[1em]
    T(x,y) & \text{otherwise},
    \end{array} \right. \\
    &=&
    T(x,y).
    \end{eqnarray*}
\end{proof}
\subsection{Quality measures}\label{subsec:measures}
One of the key decisions when designing a rule mining algorithm is how to select and order the rules that are more interesting for a concrete goal. Since the goal may not be the same depending on the task or the desires of an expert, a plethora of quality measures have been considered (see \cite{Herrera2011} for an overview). In fact, a quality measure can be generally defined as any function  $q: \mathcal{R}^{I,T} \to \RR$ which assigns a numeric value to each rule $R^{S,L}_j \in \mathcal{R}^{I,T}$ \cite{Atzmueller2015}.

In this section, we give a generalization of four of the most widely studied quality measures in rule mining algorithms: Coverage, Support, Confidence and Unusualness or Weighted Relative Accuracy (WRAcc). In our approach, the overall truth values of the antecedent, the consequent and the evaluation of the rule are used for the computation of the quality measures instead of the corresponding percentages of examples. Although some of these quality measures were already proposed in \cite{Hullermeier2001} and also in other fuzzy perspectives that do not take into account fuzzy implication functions \cite{SDEFSR,Burda2015}, in this section we have adapted their definition to our framework and also we have provided a generalization of the Weighted relative accuracy measure. Further, we justify the correct behavior of these measures with respect to the selected fuzzy operators.

\begin{definition}\label{def:fcoverage} 
	Let $I$ be a fuzzy implication function and $T$ a t-norm, we define the \emph{fuzzy coverage} as the function $FCov : \mathcal{R}^{I,T} \to [0,1]$ given by  
	$$FCov(R^{S,L}_j) = \frac{\displaystyle \sum_{d=1}^{n_e} \mu_{ant}^{R^{S,L}_j,E^d}}{|E|},$$
	and it measures the average truth value of the antecedent for all the examples.
\end{definition}

\begin{definition}\label{def:fsupport} 
	Let $I$ be a fuzzy implication function and $T$ a t-norm, we define the \textit{fuzzy support} as the function $FSupp : \mathcal{R}^{I,T} \to [0,1]$ given by 
	$$
		FSupp(R^{S,L}_j) =\frac{\displaystyle \sum_{d=1}^{n_e} \mu_{eval}^{R^{S,L}_j,E^d}}{|E|},
	$$
	and it measures the average truth value of the evaluation of the rule for all the examples.
\end{definition}

\begin{definition}\label{def:fconfidence}
	Let $I$ be a fuzzy implication function and $T$ a t-norm, we define the \emph{fuzzy confidence} as the function $FConf : \mathcal{R}^{I,T} \to [0,1]$ given by 
	$$
		FConf(R^{S,L}_j) = \frac{FSupp(R^{S,L}_j)}{FCov(R^{S,L}_j)} =  \frac{\displaystyle \sum_{d=1}^{n_e} \mu_{eval}^{R^{S,L}_j,E^d}}{\displaystyle \sum_{d=1}^{n_e} \mu_{ant}^{R^{S,L}_j,E^d}},
	$$
	and it measures the quotient of the overall truth value of the evaluation of the rule and the overall truth value of the antecedent for all examples.
\end{definition}

\begin{definition}\label{def:fwracc}
	Let $I$ be a fuzzy implication function and $T$ a t-norm, we define the \emph{fuzzy weighted relative accuracy} as the function $FWRAcc : \mathcal{R}^{I,T} \to [-1,1]$ given by 
	\begin{align*}
		& FWRAcc(R^{S,L}_j) = \\ 
		& FCov(R^{S,L}_j) \cdot \left(FConf(R^{S,L}_j) - \frac{\displaystyle \sum_{d=1}^{n_e}\mu_{con}^{R^{S,L}_j,E^d}}{|E|}\right), 
	\end{align*}
	and it measures the balance between the fuzzy coverage of the rule and its accuracy gain which is computed as the difference between the fuzzy confidence and the average truth value of the consequent for all the examples.
\end{definition}

It is clear that the above quality measures have different goals \cite{Herrera2011}: the support and coverage are measures of generality since they quantify rules according to the individual patterns of interest covered; the confidence is a measure of precision; the unusualness is a measure of interest since it is designed to quantify the potential interest for the user.

Now, let us point out that the measure of fuzzy coverage is monotone with respect to the refinements, i.e., any refinement of a rule has a lower fuzzy coverage. This fact is also true for the fuzzy support if we impose that the pair ($T$,$I$) satisfies \MTC. This highlights again the importance of the newly introduced property. If we do not consider \MTC, the fuzzy support will not necessarily be monotone, and then it will not be adequate as a measure of generality. \New{Indeed, it would be inadmissible to allow that rules with fewer conditions in the antecedent have less support than more complex rules derived from them, as it contradicts human common sense. Also, without this property, support pruning would not be possible, which clearly affects any optimization process.}

\begin{proposition}\label{prop:SD:MonotonicityFCOV} 
	Let $I$ be a fuzzy implication function, $T$ a t-norm and $Class_j \in \{Class_1,\dots,Class_{n_c}\}$, then
	$$
	FCov(R^{S,L}_j) \geq FCov(R^{\tilde{S},\tilde{L}}_{j}),
	$$
	for all $R^{S,L}_j,R^{\tilde{S},\tilde{L}}_{j}  \in \mathcal{R}^{I,T}_j$ such that $R^{S,L}_j \prec R^{\tilde{S},\tilde{L}}_{j}$.
\end{proposition}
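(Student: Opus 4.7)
The plan is to reduce the statement to the pointwise antecedent monotonicity already established in Proposition \ref{prop:SD:MonotonicityANT}. By Definition \ref{def:fcoverage}, the fuzzy coverage of a rule is nothing more than the arithmetic mean over the dataset of the truth values of its antecedent, and the denominator $|E|=n_e$ is the same for both rules since refinement does not change the underlying set of examples.

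First I would invoke Proposition \ref{prop:SD:MonotonicityANT}, which says that for every example $E^d \in E$ we have
\[
\mu_{ant}^{R^{\tilde{S},\tilde{L}}_{j},E^d} \leq \mu_{ant}^{R^{S,L}_j,E^d}.
\]
Then I would sum this inequality over all $d=1,\dots,n_e$ to obtain
\[
\sum_{d=1}^{n_e} \mu_{ant}^{R^{\tilde{S},\tilde{L}}_{j},E^d} \leq \sum_{d=1}^{n_e} \mu_{ant}^{R^{S,L}_j,E^d},
\]
and finally divide both sides by $|E|$ to recover exactly $FCov(R^{\tilde{S},\tilde{L}}_{j}) \leq FCov(R^{S,L}_j)$.

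There is no real obstacle here: the whole weight of the argument sits in the monotonicity of the t-norm $T$, which was already exploited in the proof of Proposition \ref{prop:SD:MonotonicityANT}. Notably, this result does not require either \TC or \MTC — unlike the analogous statement for $FSupp$ (which relies on Proposition \ref{prop:SD:MonotonicityEVAL} and hence on \MTC), coverage monotonicity is automatic for any pair $(T,I)$ because only the antecedent's t-norm aggregation intervenes.
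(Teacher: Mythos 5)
Your proof is correct and follows exactly the route the paper takes: the paper's proof is just ``Straightforward from Proposition \ref{prop:SD:MonotonicityANT}'', and your argument (sum the pointwise antecedent inequalities over all examples and divide by $|E|$) is precisely the elaboration of that one-liner. Your closing remark that neither \TC nor \MTC is needed here, in contrast to the $FSupp$ case, is also accurate and consistent with the paper's hypotheses.
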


\begin{proof}
	Straightforward from Proposition \ref{prop:SD:MonotonicityANT}.
\end{proof}

\begin{proposition}\label{prop:MonotonicitySD}
	Let $I$ be a fuzzy implication function, $T$ a t-norm and $Class_j \in \{Class_1,\dots,Class_{n_c}\}$. If $I$ satisfies \MTC with respect to $T$ then
	$$FSupp(R^{S,L}_j) \geq FSupp(R^{\tilde{S},\tilde{L}}_{j}),$$
	 for all $R^{S,L}_j,R^{\tilde{S},\tilde{L}}_{j}  \in \mathcal{R}^{I,T}$ such that $R^{S,L}_j \prec R^{\tilde{S},\tilde{L}}_{j}$.
\end{proposition}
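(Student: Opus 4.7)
The plan is to reduce the statement directly to Proposition \ref{prop:SD:MonotonicityEVAL}, which already does the pointwise work under the \MTC hypothesis. Concretely, I would first unfold both sides of the inequality using Definition \ref{def:fsupport}: since $|E|$ is a common positive normalizing factor, the claim $FSupp(R^{S,L}_j) \geq FSupp(R^{\tilde{S},\tilde{L}}_{j})$ is equivalent to
\[
\sum_{d=1}^{n_e} \mu_{eval}^{R^{S,L}_j,E^d} \;\geq\; \sum_{d=1}^{n_e} \mu_{eval}^{R^{\tilde{S},\tilde{L}}_{j},E^d}.
\]

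Next, since the two rules share the same consequent $\text{Class}_j$ and $R^{S,L}_j \prec R^{\tilde{S},\tilde{L}}_{j}$ holds by hypothesis, Proposition \ref{prop:SD:MonotonicityEVAL} applies under the assumption that $I$ satisfies \MTC with respect to $T$, giving
\[
\mu_{eval}^{R^{\tilde{S},\tilde{L}}_{j},E^d} \;\leq\; \mu_{eval}^{R^{S,L}_j,E^d}, \qquad \text{for every } E^d \in E.
\]
Summing this pointwise inequality over $d=1,\dots,n_e$ (and dividing by $|E|$) yields the desired monotonicity of the fuzzy support.

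There is no real obstacle here: the heavy lifting has already been done in Proposition \ref{prop:SD:MonotonicityANT} (antecedent monotonicity from the t-norm axioms) and Proposition \ref{prop:SD:MonotonicityEVAL} (propagation of that monotonicity to the evaluation via \MTC). The only thing to check is that the passage from pointwise inequalities to averages is legitimate, which is immediate because addition preserves $\leq$ and $|E| > 0$. Thus the proof will be a one-line deduction citing Proposition \ref{prop:SD:MonotonicityEVAL} and invoking Definition \ref{def:fsupport}.
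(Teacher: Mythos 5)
Your proposal is correct and matches the paper's argument exactly: the paper proves this proposition by the one-line remark ``Straightforward from Proposition \ref{prop:SD:MonotonicityEVAL},'' which is precisely the reduction you carry out, with the summation and normalization by $|E|$ made explicit. No gaps.
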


\begin{proof}
	Straightforward from Proposition \ref{prop:SD:MonotonicityEVAL}.
\end{proof}





\section{Solutions for adequate fuzzy operators}\label{sec:operator}

In Section \ref{sec:modeling} we point out that a pair $(T,I)$ of t-norm and fuzzy implication function used for modeling fuzzy implicative rules should satisfy two properties: \TC and \MTC. It is clear that there exist families of fuzzy implication functions that satisfy $\TC$, but since we have not found any information about \MTC in the literature, in order to select an adequate pair $(T,I)$ we have to further study this property.

It is well known that the main families of fuzzy implication functions satisfy \NP \cite{Baczynski2008}. In accordance, we  first prove that if $I$ satisfies \NP then \MTC implies \TC. Thus, for families that satisfy \NP we only need to study the new property \MTC for the available solutions of \TC.

\begin{proposition}
	Let $I$ be a fuzzy implication function and $T$ a t-norm. If $I$ satisfies $I(1,y) \leq y$ for all $y \in [0,1]$ and \MTC with respect to $T$, then $I$ also satisfies \TC with respect to $T$. In particular, \NP and \MTC with respect to $T$ imply \TC with respect to $T$.
\end{proposition}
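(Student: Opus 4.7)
The plan is to exploit \MTC by plugging in the largest possible value on the right-hand side, namely $x=1$, and then to read off the t-norm's neutral element. Specifically, fix arbitrary $x,y \in [0,1]$. Since $x \leq 1$, applying \MTC (with the roles of $\tilde{x}$ and $x$ in Definition~\ref{def:(MTC)} played by $x$ and $1$, respectively) gives
\[
T(x, I(x,y)) \;\leq\; T(1, I(1,y)).
\]
Because $1$ is the neutral element of the t-norm $T$, the right-hand side simplifies to $T(1, I(1,y)) = I(1,y)$.

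Now I invoke the standing hypothesis $I(1,y) \leq y$ to chain the two inequalities:
\[
T(x, I(x,y)) \;\leq\; I(1,y) \;\leq\; y,
\]
which is exactly \TC with respect to $T$. Since $x$ and $y$ were arbitrary, this holds for all $x,y \in [0,1]$.

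For the last sentence, I simply note that \NP is the stronger statement $I(1,y) = y$, which trivially implies $I(1,y) \leq y$, so the hypothesis of the first part is fulfilled and the conclusion follows.

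I do not anticipate any real obstacle here: the content of the proposition is a one-line manipulation made possible by the fact that \MTC is a uniform monotonicity statement in the first argument, so specializing to the extreme value $x=1$ collapses the t-norm and leaves only the value $I(1,y)$, which the additional hypothesis controls. The main subtlety is merely being careful about the direction of the inequality in \MTC — it bounds the expression evaluated at a smaller $\tilde{x}$ by the expression at a larger $x$ — which is why the specialization $x=1$ (the maximum) is the right one.
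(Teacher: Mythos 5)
Your proof is correct and is essentially identical to the paper's: both apply \MTC with $\tilde{x}=x$ and the larger argument equal to $1$, use the neutral element of $T$ to get $T(1,I(1,y))=I(1,y)$, and then invoke $I(1,y)\leq y$. The only difference is that you spell out the (trivial) final sentence about \NP, which the paper leaves implicit.
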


\begin{proof}
	Let $x \in [0,1]$, then by \MTC
	$$T(x,I(x,y)) \leq T(1,I(1,y))=I(1,y)\leq y.$$
\end{proof}

Next, we prove that if a fuzzy implication function satisfies \OP then the verification of \MTC can be simplified to the study of the monotonicity of a family of unary functions.

\begin{proposition}\label{prop:(MTC)&(OP)}
	Let $I$ be a fuzzy implication function and $T$ a t-norm. If $I$ satisfies \OP then $I$ satisfies \MTC if and only if the function	
	$$   
	\begin{array}{rcl}
		f_{y}:[y,1]&\longrightarrow&[0,1]\\
		x&\longmapsto& T(x,I(x,y))
	\end{array}
	$$
	is increasing for all $y \in [0,1)$.
\end{proposition}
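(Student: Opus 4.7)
The plan is to prove the two directions of the biconditional separately, with the forward direction being essentially immediate and the reverse direction requiring a case split governed by \OP.

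For the forward direction ($\Rightarrow$), I would simply specialize \MTC. Fix $y \in [0,1)$ and take any $\tilde{x}, x \in [y,1]$ with $\tilde{x} \leq x$; then \MTC gives $T(\tilde{x}, I(\tilde{x}, y)) \leq T(x, I(x, y))$, which is precisely $f_y(\tilde{x}) \leq f_y(x)$. So $f_y$ is increasing on $[y,1]$.

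For the reverse direction ($\Leftarrow$), I would fix $\tilde{x} \leq x$ and $y \in [0,1]$ (the case $y = 1$ being trivial since $I(\cdot,1) \equiv 1$ makes $T(x, I(x,1)) = x$ automatically monotone) and split into three cases based on the position of $y$ relative to $\tilde{x}$ and $x$. In the case $\tilde{x} \leq x \leq y$, \OP yields $I(\tilde{x}, y) = I(x, y) = 1$, so both sides of \MTC simplify via the neutral element of $T$ to $\tilde{x}$ and $x$ respectively, and the inequality reduces to $\tilde{x} \leq x$. In the case $y \leq \tilde{x} \leq x$, the inequality is exactly $f_y(\tilde{x}) \leq f_y(x)$, which is given by hypothesis.

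The one delicate case is $\tilde{x} \leq y \leq x$, and this is the step where I expect to need the most care. The left side becomes $T(\tilde{x}, 1) = \tilde{x}$ by \OP, while the right side is $f_y(x)$. To conclude $\tilde{x} \leq f_y(x)$, I would use the pivotal observation that \OP implies $I(y,y) = 1$, hence $f_y(y) = T(y,1) = y$. Combined with the monotonicity of $f_y$ on $[y,1]$, this gives $f_y(x) \geq f_y(y) = y \geq \tilde{x}$, closing the argument. The main obstacle is precisely identifying this boundary value $f_y(y) = y$ as the bridge between the two regions $[0,y]$ and $[y,1]$; once one notices that \OP forces $f_y$ to start at $y$ at the left endpoint of its domain, the mixed case reduces cleanly to the monotone one.
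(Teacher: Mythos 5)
Your proof is correct and follows essentially the same route as the paper: the forward direction by restriction of \MTC, and the reverse direction by the identical three-case split on the position of $y$ relative to $\tilde{x}$ and $x$, with the mixed case closed by the same bridge $f_y(x) \geq f_y(y) = T(y, I(y,y)) = y \geq \tilde{x}$ coming from \OP. Your explicit remark that $y=1$ is trivial (since $I(\cdot,1)\equiv 1$) is a small completeness point the paper leaves implicit.
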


\begin{proof}
	Let $I$ be a fuzzy implication function that satisfies \OP and $T$ a t-norm such that $f_y$ is an increasing function for all $y \in [0,1)$. Let us prove that the pair $(T,I)$ satisfies \MTC by distinguishing between three cases.
	\begin{itemize}
		\item If $y \geq x \geq \tilde{x}$ then by \OP
		\begin{eqnarray*}
          T(\tilde{x},I(\tilde{x},y)) &=& T(\tilde{x},1)=\tilde{x} \leq x = T(x,1) \\
          &=& T(x,I(x,y)).
        \end{eqnarray*}
		\item If $x \geq \tilde{x} \geq y$ then since $f_y$ is increasing 
		$$T(\tilde{x},I(\tilde{x},y)) =f_y(\tilde{x}) \leq f_y(x) \leq T(x,I(x,y)).$$
		\item If $x \geq y \geq \tilde{x}$ then by \OP and since $f_y$ is increasing
		\begin{eqnarray*}
			T(\tilde{x},I(\tilde{x},\tilde{y})) &=& T(\tilde{x},1)=\tilde{x} \leq y =T(y,1)\\
   &=&T(y,I(y,y)) =f_y(y) \leq f_y(x)\\
   &=&T(x,I(x,y)).
		\end{eqnarray*}
	\end{itemize}
	If $I$ satisfies \MTC it is straightforward to verify that the function $f_y$ is increasing for all $y \in [0,1)$.
\end{proof}

Having said this, we first gather some results about four families in the literature that satisfy \TC. The families considered in this section have been: the $R$-implications with a left-continuous t-norms \cite[Section 2.5]{Baczynski2008} as one of the most-well known families of fuzzy implication functions (which were originally considered in \cite{Hullermeier2001} for modelling fuzzy implicative association rules); the probabilistic implications which interpret the probability of an implication as the conditional probability \cite{Baczynski2016}; $k$-implications which are generated
by multiplicative generators of continuous Archimedean t-norms \cite{Zhou2021}; and strict/nilpotent $T$-power invariant implications as a family of fuzzy implication functions which satisfy the invariance with respect to the powers of a certain t-norm \cite{Fernandez-Peralta2021,Fernandez-Peralta2022}. The selection of these four families among the great variety available has been twofold: we needed a family that has constructive solutions of the $T$-conditionality for which is relatively easy to later study \MTC and also that has a nice structure for applications. In future studies, the families considered could be enlarged.

\begin{corollary}\label{cor:(TC)}
	The following statements hold:
	\begin{enumerate}[label=(\roman*)]
		\item Let $T$ be a left-continuous Archimedean t-norm and $I_T$ its $R$-implication. Then $I_T$ satisfies \TC with $T$.
		\item Let $I_C$ be a probabilistic implication, then $I_C$ satisfies \TC with \TP.
		\item Let $k$ be a $k$-generator with $k \leq \text{id}_{[0,1]}$, $I_k$ the $k$-generated implication and $T_k$ the t-norm generated by $k$ as its multiplicative generator. Then, $I_k$ satisfies \TC with each t-norm $T$ that is weaker than $T_k$, i.e., $T \leq T_k$.
		\item Let \IT be a strict $T$-power invariant implication. Then \IT satisfies  \TC with respect to $T$ if and only if $\varphi(w)=0$ for all $w \in [0,1)$, and $f(x)=g(y)=0$ for all $x,y \in (0,1)$. In this case, \IT satisfies \TC with respect to any t-norm $T^*$.
		\item 	Let \IT be a nilpotent $T$-power invariant implication and $t$ an additive generator of $T$. Then \IT satisfies  \TC with respect to $T$ if and only if $\varphi(w) \leq t^{-1} \left(t(0)(1-w)\right)$ for all $w \in [0,1)$, $f(x) \leq t^{-1}(t(0)-t(x))$ for all $x \in (0,1)$ and $g(y)=0$ for all $y \in (0,1)$.
	\end{enumerate}
\end{corollary}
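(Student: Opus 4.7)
The plan is to verify each of the five items of Corollary \ref{cor:(TC)} by direct inspection of the inequality $T(x,I(x,y)) \leq y$ using the explicit form of the implication in each family, exploiting any structural assumptions (left-continuity, multiplicative generator, invariance parameters, \emph{etc.}).

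For (i), I would invoke the residuation principle for left-continuous t-norms: since $T$ is left-continuous, the supremum in the definition of $I_T$ is attained, so
$$T(x,I_T(x,y)) = T\bigl(x,\sup\{t \in [0,1] : T(x,t) \leq y\}\bigr) = \sup\{T(x,t) : T(x,t) \leq y\} \leq y.$$
The Archimedean hypothesis is not actually needed for \TC itself but is presumably part of the family of interest. For (ii), I would compute directly: for $x > 0$ we have $T_P(x,I_C(x,y)) = x \cdot C(x,y)/x = C(x,y)$, and every copula satisfies $C(x,y) \leq \min\{x,y\} \leq y$; the case $x=0$ is trivial.

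For (iii), I would use that $T_k$ is generated by $k$ as a multiplicative generator, so $T_k(u,v) = k^{(-1)}(k(u)k(v))$. Unfolding the expression $T_k(x,I_k(x,y))$ and analysing whether $k(y)/x \leq 1$ or not, I expect the key step to reduce to the inequality $k(x) \cdot k(y)/x \leq k(y)$, which follows precisely from $k(x)/x \leq 1$, i.e.\ $k \leq \text{id}_{[0,1]}$. Once $I_k$ satisfies \TC with $T_k$, the extension to any $T \leq T_k$ is immediate from the monotonicity of t-norms in their arguments.

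The main obstacle will be (iv) and (v), because strict and nilpotent $T$-power invariant implications $\IT$ are specified by a family of parameters $(\varphi,f,g)$, and the stated conditions are both necessary and sufficient. For the ``only if'' direction I would substitute carefully chosen boundary and interior values of $(x,y)$ into $T^*(x,\IT(x,y)) \leq y$ (respectively $T(x,\IT(x,y)) \leq y$), isolating the pieces that depend on $\varphi$, $f$, and $g$, in order to force $\varphi(w)=0$ on $[0,1)$ and $f\equiv g\equiv 0$ on $(0,1)$ in the strict case, and the sharper bounds $\varphi(w) \leq t^{-1}(t(0)(1-w))$, $f(x) \leq t^{-1}(t(0)-t(x))$, $g(y)=0$ in the nilpotent case. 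For the ``if'' direction I would plug these parametric bounds back into the definition of $\IT$ and verify \TC by cases (according to whether the arguments fall in the region where $\varphi$, $f$, or $g$ governs the value), using that $t$ is an additive generator and that in the strict case the resulting implication is $\IT \equiv 1$ outside $(0,1)^2$-driven behaviour, making \TC hold against \emph{any} t-norm $T^*$.
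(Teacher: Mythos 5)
The paper does not prove any of these five items itself: its ``proof'' consists entirely of pointers to the literature (\cite[Theorem 7.4.8]{Baczynski2008} for (i), \cite[Proposition 4.3]{Baczynski2016} for (ii), \cite[Proposition 11]{Zhou2021} for (iii), and \cite[Propositions 4.44 and 4.73]{Fernandez-Peralta2023} for (iv) and (v)). Your proposal therefore takes a genuinely different route by re-deriving the results directly, and for (i)--(iii) it succeeds: the left-continuity argument pushing $T(x,\cdot)$ through the supremum is the standard proof that $R$-implications of left-continuous t-norms satisfy \TC (and you are right that Archimedeanity is not needed for this); the computation $\TP(x,I_C(x,y))=C(x,y)\leq\min\{x,y\}\leq y$ via the Fr\'echet--Hoeffding bound is correct; and for (iii) the reduction to $k(x)k(y)/x\leq k(y)$, i.e.\ $k\leq\mathrm{id}_{[0,1]}$, together with the case split on $k(y)/x\lessgtr 1$ and monotonicity to pass from $T_k$ to any $T\leq T_k$, matches the computation the paper itself later reuses in Proposition \ref{prop:(MTC)}(iv). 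These three arguments are more informative than the paper's citations.

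Items (iv) and (v), however, are a genuine gap. You describe a strategy (substitute boundary values to force the conditions on $\varphi$, $f$, $g$; then verify sufficiency by cases), but you never state or use the actual definition of a strict or nilpotent $T$-power invariant implication \IT in terms of $(\varphi,f,g)$ and the $T$-powers $x_T^{(r)}$, so no step of the claimed necessity or sufficiency argument is actually carried out. In particular, the characterizations here are ``if and only if'' statements with quantitative bounds such as $\varphi(w)\leq t^{-1}(t(0)(1-w))$, and nothing in your text explains where $t^{-1}(t(0)(1-w))$ comes from; the plan as written could equally well ``justify'' a different bound. As the proposal stands, (iv) and (v) remain unproved, and one must still fall back on the cited results in \cite{Fernandez-Peralta2023}.
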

\begin{proof}
	\begin{enumerate}[label=(\roman*)]
		\item \cite[Theorem 7.4.8]{Baczynski2008}.
		\item \cite[Proposition 4.3]{Baczynski2016}.
		\item \cite[Proposition 11]{Zhou2021}.
		\item \cite[Proposition 4.44]{Fernandez-Peralta2023}. 
		\item \cite[Proposition 4.73]{Fernandez-Peralta2023}. 
	\end{enumerate}
\end{proof}

Next, we point out particular cases of fuzzy implication functions that belong to one of these five families and satisfy \MTC. Also, we consider the fuzzy implication function in Eq. (\ref{eq:y_implication}) just to recall again that this operator generalizes the fuzzy conjunctive rule perspective.

\begin{proposition}\label{prop:(MTC)} 
The following statements hold:
	\begin{enumerate}[label=(\roman*)]
		\item The pair $(\IGD, \TM)$ satisfies \MTC. In this case,
        \begin{equation}
        \TM(x,\IGD(x,y)) = \min \{x,y\}.
        \end{equation}
		\item Let $T$ be a left-continuous Archimedean t-norm and $I_T$ its $R$-implication. Then $I_T$ satisfies \MTC with $T$. In this case,
        \begin{equation}
        T(x,I_T(x,y)) = \min \{x,y\}.
        \end{equation}
		\item Let $I_C$ be a probabilistic implication, then $I_C$ satisfies \MTC with \TP. In this case,
        \begin{equation}
        \TP(x,I_C(x,y)) = C(x,y).
        \end{equation}
		\item  Let $k$ be a $k$-generator, $I_k$ the $k$-generated implication and $T_k$ the t-norm generated by $k$ as its multiplicative generator. If $k(\tilde{x}) \cdot x \leq k(x) \cdot \tilde{x}$ for all $0\leq \tilde{x} \leq x \leq 1$, then $I_k$ satisfies \MTC with $T_k$. In this case,
        \begin{equation}\label{eq:mtc_kimpl}
        T_k(x,I_k(x,y)) = k^{-1} \left(k(x) \cdot \min\left\lbrace\frac{k(y)}{x},1\right\rbrace\right).
        \end{equation}
		\item Let $T$ be a continuous Archimedean t-norm and $\IT$ the strict/nilpotent $T$-power invariant implication obtained from it. If $\IT$ satisfies \TC and \MTC with $T$ then $T(x,\IT(x,y))=0$ for all $x \in [0,1]$ and $y \in (0,1)$.
        \item  Let $T$ be any t-norm and $I_Y$ the fuzzy implication function given by Eq. (\ref{eq:y_implication}), then $I_Y$ satisfies \MTC with respect to $T$. In this case,
        $$T(x,I_Y(x,y)) = T(x,y).$$
	\end{enumerate}
\end{proposition}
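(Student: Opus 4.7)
The plan is to dispatch the six items separately; in each case I would compute $T(x, I(x,y))$ in closed form as a function of $x$ for fixed $y$ and read off monotonicity in $x$, which is the content of \MTC. For items (i), (ii) and (vi), the computation is a direct case split. In (i): if $x \leq y$ then $\IGD(x,y)=1$ and $\TM(x,1)=x=\min\{x,y\}$, while if $x > y$ then $\IGD(x,y)=y$ and $\TM(x,y)=y=\min\{x,y\}$. In (ii), the classical R-implication identity $T(x,I_T(x,y))=\min\{x,y\}$ for left-continuous Archimedean $T$, a standard consequence of the residuation adjunction, handles both cases uniformly. In (vi), three easy subcases ($x=0$; $y=1$; and otherwise) all collapse to $T(x,I_Y(x,y))=T(x,y)$. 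Monotonicity in $x$ of $\min\{x,y\}$ and of any t-norm is then immediate.

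Item (iii) is a one-line substitution: for $x > 0$, $\TP(x,I_C(x,y)) = x \cdot C(x,y)/x = C(x,y)$, and for $x=0$ the left-hand side is also $0 = C(0,y)$; monotonicity in $x$ is then a defining property of copulas. Item (iv) requires a case split on $x \leq k(y)$ versus $x > k(y)$. Since $k$ is strictly increasing and continuous with $k(1)=1$, its pseudo-inverse satisfies $k^{(-1)}(t)=k^{-1}(t)$ on $[0,1]$ and $k^{(-1)}(t)=1$ for $t>1$, so $k(I_k(x,y))=\min\{k(y)/x,\,1\}$, yielding formula (\ref{eq:mtc_kimpl}). On the first region the value is $x$; on the second it is $k^{-1}(k(x)\,k(y)/x)$, and the two pieces agree at the transition $x=k(y)$. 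Monotonicity of the second piece in $x$, via monotonicity of $k^{-1}$, reduces to monotonicity of the map $x\mapsto k(x)/x$, which is exactly the hypothesis $k(\tilde x)\,x \leq k(x)\,\tilde x$ for $\tilde x \leq x$.

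The main obstacle is item (v), which is a rigidity statement rather than a direct computation. Starting from Corollary \ref{cor:(TC)}(iv)--(v), the assumption of \TC forces the generators $f$, $g$, $\varphi$ of \IT to vanish on the open intervals (with $g$ identically zero on $(0,1)$ in both the strict and nilpotent cases, and $f$, $\varphi$ either zero or bounded by terms involving the additive generator $t$). I would substitute these constraints into the parametric form of the $T$-power invariant implication \IT recalled in \cite{Fernandez-Peralta2023}, so that \IT$(x,y)$ with $x,y\in(0,1)$ reduces to an expression whose image under $T(x,\cdot)$ either vanishes automatically (strict case, since $T=\TP$-type products with $\varphi\equiv 0$ on $[0,1)$ kill the interior) or is pushed into the zero region of the nilpotent additive generator by the bound $\varphi(w)\leq t^{-1}(t(0)(1-w))$. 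A case analysis on the four boundary pieces of the unit square and the interior then shows that any nonzero interior value of $T(x,\IT(x,y))$ is incompatible with \MTC, typically by comparing with the limit as $x\to 1^-$ where $\IT(1,y)$ is fixed by the definition of a fuzzy implication function and any jump would break monotonicity in $x$. The bookkeeping across these boundary pieces is delicate, but once every piece is checked the conclusion $T(x,\IT(x,y))=0$ for $y\in(0,1)$ follows.
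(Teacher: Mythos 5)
Your treatment of items (i)--(iv) and (vi) matches the paper's proof in substance: in each case one computes $T(x,I(x,y))$ in closed form and reads off monotonicity in $x$ (the paper routes (i) and (ii) through Proposition \ref{prop:(MTC)&(OP)} via \OP, but since the closed form $\min\{x,y\}$ is monotone in $x$ everywhere, your direct reading is equivalent; your handling of (iii), (iv) and (vi) coincides with the paper's case splits). One caveat on (ii): the identity $T(x,I_T(x,y))=\min\{x,y\}$ is \emph{not} a consequence of the residuation adjunction alone --- residuation only yields $T(x,I_T(x,y))\leq\min\{x,y\}$, and equality requires divisibility, i.e.\ continuity of $T$. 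The paper's own proof silently works with a \emph{continuous} Archimedean $T$ and computes via an additive generator; you should justify the identity that way rather than attribute it to the adjunction.

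The genuine gap is in item (v). First, your pivot --- that ``$\IT(1,y)$ is fixed by the definition of a fuzzy implication function'' --- is false: Definition \ref{defimp} fixes only $I(1,1)=1$ and $I(1,0)=0$, and the paper explicitly notes that the section $I(1,\cdot)$ is otherwise free. What is actually needed, and what the paper uses, is that \TC forces $\IT(1,y)=g(y)=0$ for all $y\in(0,1)$ (a constraint you do extract from Corollary \ref{cor:(TC)}(iv)--(v), but then do not exploit). Second, once that fact is in hand, no substitution into the parametric form of \IT and no case analysis over the boundary pieces of the unit square is needed: applying \MTC with the pair $\tilde{x}=x$, $x=1$ gives directly $T(x,\IT(x,y))\leq T(1,\IT(1,y))=T(1,g(y))=T(1,0)=0$ for every $x\in[0,1]$ and $y\in(0,1)$, which is the entire claim. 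Your proposed ``delicate bookkeeping'' is therefore both unnecessary and, as sketched, resting on the incorrect premise above, so the plan for (v) does not close as written.
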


\begin{proof}
	\begin{enumerate}[label=(\roman*)]
		\item  Let $y \in [0,1)$, then
		\begin{eqnarray*}
        \TM(x,\IGD(x,y))
		&=&
		\left\{ \begin{array}{ll}
			\TM(x,1) &   \text{if }   x \leq y , \\
			\TM(x,y) & \text{if } x>y, 
		\end{array} \right. \\
		&=&
		\left\{ \begin{array}{ll}
			x &   \text{if }   x \leq y , \\
			y & \text{if } x>y, 
		\end{array} \right.
		\end{eqnarray*}
		and since \IGD satisfies \OP, by Proposition \ref{prop:(MTC)&(OP)}  we have that \IGD satisfies \MTC with \TM.
		\item Let $T$ be a continuous Archimedean t-norm and $t$ an additive generator of $T$, then the corresponding $R$-implication is given by
		$$ I_T(x,y) = \left\{ \begin{array}{ll}
			1 &   \text{if }   x \leq y , \\
			t^{-1}(t(y)-t(x)) & \text{if } x>y. 
		\end{array} \right.
		$$
		Let $y \in [0,1)$, then
		\begin{eqnarray*}
		T(x,I_T(x,y))
		&=&
		\left\{ \begin{array}{ll}
			\hspace{-1mm} T(x,1) & \hspace{-1mm}  \text{if }   x \leq y , \\
			\hspace{-1mm} T(x,t^{-1}(t(y)-t(x))) & \hspace{-1mm} \text{if } x>y, 
		\end{array} \right. \\
		&=&
		\left\{ \begin{array}{ll}
			\hspace{-1mm} x &  \hspace{-2mm} \text{if }   x \leq y , \\
			\hspace{-1mm} t^{(-1)}(t(x)+t(y)-t(x)) & \hspace{-2mm}\text{if } x>y, 
		\end{array} \right. \\
		&=&
		\left\{ \begin{array}{ll}
			\hspace{-1mm} x &   \text{if }   x \leq y , \\
			\hspace{-1mm} y & \text{if } x>y. 
		\end{array} \right.
		 \end{eqnarray*}
	 	Since $I_T$ satisfies \OP, by Proposition \ref{prop:(MTC)&(OP)}  we have that $I_T$ satisfies \MTC with $T$.
		\item Let $I_C$ be a probabilistic implication, then for all $\tilde{x},x,y \in [0,1]$ such that $0<\tilde{x} \leq x$ we have
		\begin{eqnarray*}
          \TP(\tilde{x},I_C(\tilde{x},y))&=&\tilde{x} \cdot  \frac{C(\tilde{x},y)}{\tilde{x}} = C(\tilde{x},y) \leq C(x,y) \\
          &=&  \TP(x,I_C(x,y)).
        \end{eqnarray*}
		On the other hand, if $\tilde{x}=0$ then $\TP(\tilde{x},I_C(\tilde{x},y))=\TP(0,I_C(0,y)) = 0 \leq \TP(x,I_C(x,y))$ for all $x \in [0,1]$.
		\item According to the proof of \cite[Proposition 11]{Zhou2021} we have 
		$$T_k(x,I_k(x,y)) = k^{-1} \left(k(x) \cdot \min\left\lbrace\frac{k(y)}{x},1\right\rbrace\right).$$
		Let $\tilde{x},x,y \in [0,1]$, if $\tilde{x}=0$ then it is clear that $T_k(\tilde{x},I_k(\tilde{x},y))=T_k(0,I_k(0,y))=0 \leq T_k(x,I_k(x,y))$, so let us consider $0<\tilde{x}\leq x \leq 1$. We distinguish between different cases:
		\begin{itemize}
			\item If $k(y)>x$ then $k(y) >x \geq \tilde{x}$ and
			$$T_k(\tilde{x},I_k(\tilde{x},y)) = \tilde{x} \leq x = T_k(x,I_k(x,y)).$$
			\item If $k(y)> \tilde{x}$ and $k(y)\leq x$, since $\frac{k(\tilde{x})}{\tilde{x}} \leq \frac{k(x)}{x}$ we have
			\begin{eqnarray*}
			T_k(x,I_k(x,y)) &=& k^{-1} \left(\frac{k(x)k(y)}{x}\right) \\
            &\geq& k^{-1} \left(\frac{k(\tilde{x})k(y)}{\tilde{x}}\right) 
			\geq k^{-1}(k(\tilde{x})) \\&=&\tilde{x} = T_k(\tilde{x},I_k(\tilde{x},y)).
			\end{eqnarray*}
		\item If $k(y) \leq \tilde{x}$, since $\frac{k(\tilde{x})}{\tilde{x}} \leq \frac{k(x)}{x}$ we have
		\begin{eqnarray*}
		T_k(\tilde{x},I_k(\tilde{x},y)) &=& k^{-1} \left(\frac{k(\tilde{x})k(y)}{\tilde{x}}\right) \\
        &\leq& k^{-1} \left(\frac{k(x)k(y)}{x}\right) \\
        &=& T_k(x,I_k(x,y)).
		\end{eqnarray*}
		\end{itemize}
		\item Let $T$ be a continuous Archimedean t-norm and $\IT$ the corresponding strict/nilpotent $T$-power invariant implication. If $\IT$ satisfies  \TC with respect to $T$ then by \cite[Propositions 4.44 and 4.73]{Fernandez-Peralta2023} we obtain that $\varphi(0^+)=0$ and by \cite[Proposition 3]{Fernandez-Peralta2021} and \cite[Proposition 20]{Fernandez-Peralta2022} we have that $\IT(1,y)=g(y)=0$ for all $y \in (0,1)$. Then, since \IT satisfies \MTC with $T$ we have
		\begin{eqnarray*}
        T(x,\IT(x,y)) &\leq & T(1,\IT(1,y))= T(1,g(y))\\
        &=&T(1,0)=0,
        \end{eqnarray*}
		which implies $T(x,\IT(x,y)) = 0$ for all $x \in [0,1]$ and $y \in (0,1)$.
        \item It is proved in Proposition \ref{prop:equivalence_conjunctive}.
	\end{enumerate}
\end{proof}

From Proposition \ref{prop:(MTC)} we can conclude the following:

\begin{itemize}
	\item  If $T$ is the minimum or a continuous Archimedean t-norm then it satisfies \MTC with the corresponding $R$-implication, $I_T$. Moreover, by Corollary \ref{cor:(TC)} we know that in this case $(T,I_T)$ also satisfies \TC. Thus, we deduce that $(T,I_T)$ is an adequate pair for fuzzy implicative rule mining when $T$ is the minimum or a continuous Archimedean t-norm. Notice that we have not studied \MTC when $T$ is different from the minimum or a continuous Archimedean t-norm. These cases could be performed in future studies of this property.
	\item If $I_C$ is a probabilistic implication, then it satisfies \MTC with respect to \TP. Thus, $(\TP,I_C)$ is an adequate pair for fuzzy implicative rule mining.
	\item  If $k$ is a $k$-generator with $k(\tilde{x}) \cdot x \leq k(x) \cdot \tilde{x}$ for all $0 \leq \tilde{x} \leq x \leq 1$ then $(T_k,I_k)$ where $I_k$ is the $k$-generated implication and $T_k$ the t-norm generated by $k$ as its multiplicative generator satisfy \MTC. In this case, if we consider $x=1$ then we have $k(\tilde{x}) \leq k(1) \cdot \tilde{x} = \tilde{x} = \text{id}(\tilde{x})$ for all $0 \leq \tilde{x} \leq 1$ and the pair $(T_k,I_k)$ also satisfies \TC. In particular, $k(0)=0$ in this case. Then, $(T_k,I_k)$ is an adequate pair for fuzzy implicative rule mining under these conditions. For instance, let us consider the following $k$-generators:
	$$k_{\lambda}^{SS}(x) = e^{\frac{x^{\lambda}-1}{\lambda}}, \quad \lambda \in (-\infty,0),$$
	$$k_{\lambda}^{H}(x) = \frac{x}{\lambda + (1-\lambda)x}, \quad \lambda \in (1,+\infty),$$
	$$k_{\lambda}^{F}(x) = \frac{\lambda^x-1}{\lambda-1}, \quad \lambda \in (1,+\infty).$$
	Then, we consider the functions $f_{k_{\lambda}}^L(x)=\frac{k_{\lambda}^{L}(x)}{x}$ for all $x \in (0,1]$ and $L \in \{SS,H,F\}$ and we compute their first derivative:
	$$(f_{k_{\lambda}}^{SS})'(x)= \frac{e^{\frac{x^{\lambda}-1}{\lambda}} \cdot (x^{\lambda}-1)}{x^2}, \quad \lambda \in (-\infty,0),$$
	$$(f_{k_{\lambda}}^{H})'(x)= \frac{\lambda-1}{(\lambda+(1-\lambda)x)^2}, \quad \lambda \in (1,+\infty),$$
	$$(f_{k_{\lambda}}^{F})'(x)= \frac{\lambda^x (x \ln \lambda -1)+1}{x^2 (\lambda-1)}, \quad \lambda \in (1,+\infty).$$
	It is straightforward to verify that $(f_{k_{\lambda}}^{L})'(x)>0$ for all $x \in (0,1]$, so all $f_{k_{\lambda}}^{L}$ are increasing and consequently, $k_{\lambda}^L(\tilde{x}) \cdot x \leq k_{\lambda}^{L}(x) \cdot \tilde{x}$ for all $0 \leq \tilde{x} \leq x \leq 1$,  $L \in \{SS,H,F\}$ and the corresponding domain of $\lambda$. Therefore, all the pairs $(T_{k_{\lambda}^L},I_{k_{\lambda}^L})$ are adequate for fuzzy implicative rule mining (in \cite[Example 1]{Zhou2021} the reader can find the corresponding constructions).
	\item The point (v)-Proposition \ref{prop:(MTC)} discloses that the family of $T$-power invariant implications does not provide interesting solutions when studying the property \MTC. Indeed, any solution of this property returns zero when applying the generalized modus ponens. Thus, for the moment, there are no solutions if we want to use fuzzy implication functions that are invariant with respect to hedges modeled through powers of t-norms.
    \item Finally, if we consider any t-norm and the fuzzy implication function in Eq. (\ref{eq:y_implication}), then we obtain a pair of operators that is adequate for fuzzy implicative rule mining, and the modeling is equivalent to fuzzy conjunctive rules.
\end{itemize}

\New{
Once we have several options for the selection of the pair of operators $(T,I)$ it is interesting to look at the structure of the evaluation of the antecedent and the rule for different pairs. In Proposition \ref{prop:(MTC)} we can see the expression of the truth value of the evaluation of the rule when the generalized modus ponens is used for each case. Notice that even though fuzzy implication functions are non-commutative, it may happen that the expression given by the generalized modus ponens is commutative. Indeed, this is the case for $R$-implications and also probabilistic implications that use a commutative copula. What is more, any choice of $R$-implication, one of the most popular fuzzy implication functions in fuzzy inference, is equivalent to using the minimum t-norm, and then it does not provide useful and different information for our framework. Thus, the user might like to avoid these options if the purpose is to obtain directional rules, in which the permutation of antecedent and consequent is significantly different.  Then, we have seen that it may be hard to find a pair of operators $(T,I)$ so the value $T(x,I(x,y))$ fulfills all the desired properties. Indeed, although probabilistic implications are an option, it is not so easy to find a non-commutative copula fulfilling Eq. (\ref{eq:condition_copula}). Besides, $k$-implications reduce the value of that expression to 1 in a certain subdomain of $[0,1]^2$ (see Eq. (\ref{eq:mtc_kimpl})). Nonetheless, it is not hard to propose a pair $(T,I)$ that fulfills all the conditions and, in contrast with those operators already introduced in the literature, it may be more tailored for a specific problem. For instance, we propose the following fuzzy implication function.
\begin{proposition}\label{prop:Ip} 
Let $p \in (0,+\infty)$ and $I_p :[0,1]^2 \to [0,1]$ defined as follows:
	\begin{equation}\label{eq:Ipq} 
		I_{p}(x,y) = 1-x + xy ^p, \quad \text{for all } x,y \in [0,1]. 
	\end{equation}
Then $I_p$ is a fuzzy implication function with $\TLK(x,I_p(x,y)) = xy^p$ and the pair $(\TLK,I_p)$ fulfills \TC and \MTC.
\end{proposition}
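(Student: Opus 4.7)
The plan is to verify each of the four claims by direct, elementary computation, exploiting the explicit closed form of $I_p$.

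First, I would check that $I_p$ is a fuzzy implication function. Properties \Ione and \Itwo reduce to a sign analysis of the partial derivatives: $\partial_x I_p(x,y) = y^p - 1 \leq 0$ and $\partial_y I_p(x,y) = pxy^{p-1} \geq 0$ on $(0,1)^2$, extending to the boundary by continuity. Property \Ithree follows by plugging in $(0,0)$, $(1,1)$ and $(1,0)$. As a sanity check that $I_p$ actually maps into $[0,1]$, one notes that $xy^p \in [0,x]$, so $I_p(x,y) \in [1-x, 1] \subseteq [0,1]$.

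Next, I would compute $\TLK(x, I_p(x,y))$ by expanding $\max\{x + (1-x+xy^p) - 1,\, 0\}$; the $+x$ and $-x$ cancel together with the $+1$ and $-1$, leaving $\max\{xy^p, 0\} = xy^p$, which yields the stated identity.

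With this identity in hand, \MTC is immediate: for fixed $y$, the map $x \mapsto xy^p$ is monotone increasing, so $\TLK(\tilde{x}, I_p(\tilde{x},y)) = \tilde{x}y^p \leq xy^p = \TLK(x, I_p(x,y))$ whenever $\tilde{x} \leq x$. For \TC, I would need to show $xy^p \leq y$ for all $x,y \in [0,1]$; after disposing of $y=0$ trivially, this reduces to $xy^{p-1} \leq 1$, which follows from $x \leq 1$ together with $y^{p-1} \leq 1$ for $y \in (0,1]$ when $p \geq 1$.

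The step deserving the most care is \TC, since it is the only place where the value of $p$ enters nontrivially: the other verifications hold regardless of $p$, whereas \TC hinges on the bound $y^{p-1} \leq 1$ on $(0,1]$. Once this inequality is justified for the intended range of $p$, the closed-form expression $\TLK(x, I_p(x,y)) = xy^p$ makes the remaining properties essentially automatic.
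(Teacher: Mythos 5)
Your verification of \Ione, \Itwo, \Ithree, the identity $\TLK(x,I_p(x,y))=xy^p$, and \MTC matches the paper's proof step for step: both arguments are direct computations from the closed form, and both observe that once $T(x,I_p(x,y))=xy^p$ is in hand, \MTC is just monotonicity of $x\mapsto xy^p$ for fixed $y$.

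Where you diverge is on \TC, and you are right to single it out as the delicate step. The paper justifies \TC with the inequality $xy^p\leq x$, but that is not what \TC asks for: the definition requires $T(x,I(x,y))\leq y$, i.e. $xy^p\leq y$. You correctly reduce this to $xy^{p-1}\leq 1$ and observe that the reduction only goes through for $p\geq 1$. In fact, for $p\in(0,1)$ the property genuinely fails: taking $x=1$ gives $\TLK(1,I_p(1,y))=y^p>y$ for every $y\in(0,1)$. So the proposition as stated, for all $p\in(0,+\infty)$, cannot be correct with respect to \TC, and your implicit restriction to $p\geq 1$ is the right fix rather than a defect of your argument. The one improvement to make is to say this outright instead of deferring to ``the intended range of $p$'': either the hypothesis must be strengthened to $p\geq 1$, or the conclusion must drop \TC for $p<1$. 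This is not a cosmetic point, since the paper's experiments use $p=0.01$, squarely in the regime where \TC fails.
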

\begin{proof}
	First, we prove that $I_p$ is a fuzzy implication function. Indeed, $I_p(0,0)=I_p(1,1)=0$, $I_p(1,1)=1$ and
	$$
	x_1 \leq x_2 \Rightarrow 1+x_1(y^p-1) \geq 1+x_2(y^p-1),
	$$
	$$
	y_1 \leq y_2 \Rightarrow 1+x(y_1^p-1) \leq 1+x(y_2^p-1),
	$$
	so we obtain $_pI(x_1,y) \geq I_p(x_2,y)$ and $I_p(x,y_1) \leq I_p(x,y_2)$.
	$$
	\TLK(x,I_p(x,y)) = \max \{x+1-x+xy^p-1,0\} = xy^p.
	$$
	Now, since $xy^p \leq x$ for all $x,y \in [0,1]$ and $x_1y^p \leq x_2y^p$ we know that the pair $(\TLK,I_p)$ fulfills \TC and \MTC.
\end{proof}
With the pair of operators in Proposition \ref{prop:Ip}, we can adjust the importance of the consequent by modifying the parameter $p$; if $p>1$, we shrink its contribution, whereas when $p<1$, we inflate it. Moreover, the functional form of the operators is simple and easily implementable. We see in Section \ref{subsec:directional_patterns} how this non-commutative behavior is a key point when trying to detect directional patterns.}

\New{
Summarizing the above discussion, in the following proposition, we point out six different pairs of fuzzy implication functions and t-norms that are adequate for fuzzy implicative rule mining. More specifically, in Table \ref{table:selectedoperators} we propose four concrete examples of adequate pairs, which are the ones considered in our experimental results.
\begin{proposition}\label{prop:adequate_pairs}
The following pairs are adequate for fuzzy implicative rule mining in the sense of Definition \ref{def:adequate_pair}:
	\begin{itemize}
		\item $(\TM,\IGD)$.
		\item $(T,I_{\bf T})$ where $T$ is a continuous, Archimedean t-norm.
		\item $(\TP,I_C)$ where $I_C$ is a probabilistic fuzzy implication.
		\item $(T_k,I_k)$ where $k$ is a $k$-generator with $k(\tilde{x})\cdot x \leq k(x) \cdot \tilde{x}$ for all $0\leq \tilde{x} \leq x \leq 1$, $I_k$ is a $k$-generated implication and $T_k$ the t-norm generated by $k$ as its multiplicative generator.
        \item $(T,I_Y)$ where $I_Y$ is the fuzzy implication function in Eq. (\ref{eq:y_implication}) and $T$ is any t-norm.
        \item $(\TLK,I_p))$, where $I_p$ is the fuzzy implication function in Eq. (\ref{eq:Ipq}).
	\end{itemize}
\end{proposition}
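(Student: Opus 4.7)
The statement is essentially a consolidation of results already established earlier in the section, so the plan is to verify each of the six pairs by invoking the relevant prior propositions and corollaries. By Definition \ref{def:adequate_pair}, for each pair $(T,I)$ I need to certify both \TC and \MTC.

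First I would handle the three cases that are covered jointly by Corollary \ref{cor:(TC)} and Proposition \ref{prop:(MTC)}. For $(\TM,\IGD)$, property \MTC is explicit in Proposition \ref{prop:(MTC)}(i); \TC is immediate since $\TM(x,\IGD(x,y)) = \min\{x,y\} \leq y$. For $(T,I_T)$ with $T$ continuous Archimedean, \TC comes from Corollary \ref{cor:(TC)}(i) (continuous Archimedean t-norms are in particular left-continuous) and \MTC from Proposition \ref{prop:(MTC)}(ii). For $(\TP,I_C)$ with $I_C$ a probabilistic implication, \TC is Corollary \ref{cor:(TC)}(ii) and \MTC is Proposition \ref{prop:(MTC)}(iii).

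Next I would deal with the $k$-generated case $(T_k,I_k)$. \MTC is exactly Proposition \ref{prop:(MTC)}(iv) under the hypothesis $k(\tilde{x})\cdot x \leq k(x)\cdot \tilde{x}$ for all $0\leq \tilde{x}\leq x\leq 1$. For \TC, I would observe that specializing the hypothesis at $x=1$ gives $k(\tilde{x}) \leq \tilde{x}$ for every $\tilde{x}\in[0,1]$, i.e.\ $k\leq \mathrm{id}_{[0,1]}$. Then Corollary \ref{cor:(TC)}(iii) applies with the choice $T=T_k$ (since $T_k\leq T_k$), yielding \TC. For $(T,I_Y)$, both \TC and \MTC are contained in Proposition \ref{prop:equivalence_conjunctive}, so nothing more is needed. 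Finally, the pair $(\TLK,I_p)$ is exactly the conclusion of Proposition \ref{prop:Ip}.

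There is no genuine obstacle here: every assertion is a direct citation of a statement proved earlier in the section. The only mild subtlety is making explicit the implication $k(\tilde{x})\cdot x \leq k(x)\cdot \tilde{x} \Rightarrow k\leq \mathrm{id}_{[0,1]}$ needed to apply Corollary \ref{cor:(TC)}(iii) for the $k$-generated pair, but this is a one-line specialization.
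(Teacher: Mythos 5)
Your proposal is correct and follows essentially the same route as the paper, which likewise presents this proposition as a summary of Corollary \ref{cor:(TC)}, Proposition \ref{prop:(MTC)}, Proposition \ref{prop:equivalence_conjunctive} and Proposition \ref{prop:Ip}, including the same one-line specialization $x=1$ to get $k\leq \mathrm{id}_{[0,1]}$ for the $k$-generated case. Your direct verification of \TC for $(\TM,\IGD)$ via $\TM(x,\IGD(x,y))=\min\{x,y\}\leq y$ is a small but welcome extra step, since Corollary \ref{cor:(TC)}(i) as stated assumes an Archimedean t-norm and so does not literally cover the minimum.
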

\begin{table}[!htbp]
	\caption{Four examples of pairs $(T,I)$ which fulfill \TC and \MTC.}\label{table:selectedoperators}
	\centering
	\setlength\tabcolsep{5pt}
	\renewcommand{\arraystretch}{1.3} \large
	\resizebox{9cm}{!}{
		\begin{tabular}{|l|l|}
	\hline
	\bf t-norm $\bm{T}$ & \bf Fuzzy implication function $\bm{I}$\\ \hline
	\begin{tabular}[c]{@{}l@{}} \textit{Product t-norm} \\ $\TP(x,y)=xy$.   \end{tabular} & \begin{tabular}[c]{@{}l@{}}
		$
		I_Y(x,y)
		=
		\left\{\begin{array}{ll}
			1 & \text{if } x=0 \text{ and } y=1, \\
			y& \text{otherwise}.
		\end{array}
		\right.
		$\end{tabular}       \\ \hline
	\begin{tabular}[c]{@{}l@{}} \textit{Łukasiewicz t-norm} \\ $\TLK(x,y)=\max\{x+y-1,0\}$.  \end{tabular} & 	\begin{tabular}[c]{@{}l@{}}			\textit{Łukasiewicz fuzzy implication} \\
		$
		\ILK(x,y)
		= \min\{1,1-x+y\}.
		$ \end{tabular}      \\ \hline
			\begin{tabular}[c]{@{}l@{}l@{}} \textit{Schweizer-Sklar t-norm} \\ $
	T_{\lambda}^{\bm{SS}}(x,y)
	=
	(\max\{(x^{\lambda}+y^{\lambda}-1),0\})^{\frac{1}{\lambda}}.
	$  \end{tabular}  & \begin{tabular}[c]{@{}l@{}l@{}}
	\textit{$k_{\lambda}$-generated Schweizer-Sklar implications} \\
	$
	I^{\bm{SS}}_{k_{\lambda}}(x,y)
	=
	\left\{\begin{array}{ll}
		1 & \text{ if } x \leq e^{\frac{y^{\lambda}-1}{\lambda}}, \\
		(y^{\lambda}-\lambda \ln x)^{\frac{1}{\lambda}} & \text{otherwise,}
	\end{array}
	\right.
	$\\
	with $\lambda \in (-\infty,0)$.
\end{tabular}   \\ \hline
	\begin{tabular}[c]{@{}l@{}} \textit{Łukasiewicz t-norm} \\ $\TLK(x,y)=\max\{x+y-1,0\}$.  \end{tabular} & 	\begin{tabular}[c]{@{}l@{}}
	$
	I_p(x,y) = 1-x+xy^p.
	$ \end{tabular}     \\ \hline
\end{tabular}
}
\end{table}
}

\section{Experiments}\label{sec:experiments}
\New{In this section, we discuss the role of fuzzy implicative rules in the specific problem of mining association rules, which we use as an illustrative case to demonstrate the utility of the framework. For simplicity, we rely on the well-known exhaustive Apriori algorithm \cite{Agrawal1993}, enhanced with redundancy pruning as described in \cite{Hullermeier2003}. Apriori identifies frequent itemsets based on a minimum fuzzy coverage threshold and subsequently derives association rules by applying minimum thresholds for fuzzy support and confidence. Unlike classical approaches, in our framework support is not necessarily symmetric, and this asymmetry is explicitly taken into account when formulating the final rules. Consequently, for each frequent itemset, we must consider all possible directional antecedent–consequent splits. Accordingly, the rule-generation cost increases with the frequent itemset size, yet empirically, it did not dominate the runtime for the datasets evaluated. The frequent itemset generation cost increases exponentially with the number of items, just as classical Apriori. Nonetheless, coverage and support thresholds significantly lower the computation time. The Python implementation is available in the open repository \url{https://github.com/rferper/FIRM}
, where we allow the user to select custom fuzzy partitions and operators. Moreover, as we proved in Section~\ref{subsection:generalization}, our framework generalizes other fuzzy approaches, so we use our implementation to compare with existing fuzzy association mining tools such as the \texttt{lfl} package \cite{Burda2022}. For comparison with crisp association rules, we use the \texttt{arules} package \cite{hahsler2023}. In this case, discretization is performed using the same cuts as the fuzzy sets to ensure a fair evaluation.} 
\New{\subsection{Mining directional patterns}\label{subsec:directional_patterns}
This section presents a controlled experiment illustrating how fuzzy association rules may or may not capture directional dependencies. We first generated a synthetic bivariate dataset designed to encode an asymmetric relation. Variable \(A\) was sampled uniformly on \([0,100]\). Conditional on \(A\), variable \(B\) followed
\[
B \sim 
\begin{cases}
	\mathcal{U}(0,100), & A \le 70,\\[4pt]
	\mathcal{U}(70,100), & A > 70,
\end{cases}
\]
so that large values of \(A\) enforce correspondingly large values of \(B\). 
This mechanism creates a structured gap in the joint distribution. In total, we generated 1000 points and the resulting scatter plot is shown in Figure~\ref{figure:synthetic_data}.
\begin{figure}[ht]
	\centering
	\includegraphics[width=0.7\columnwidth]{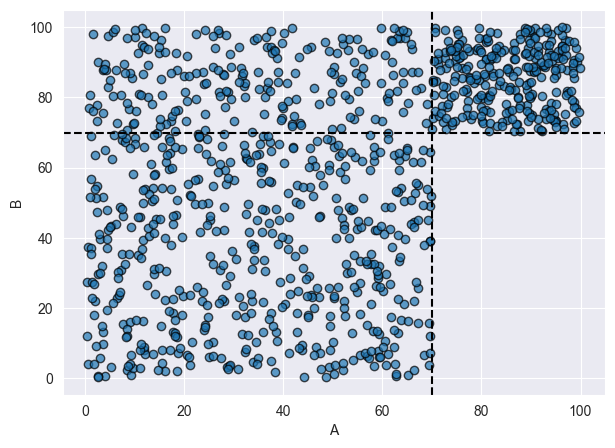}
	\caption{Synthetic dataset illustrating an induced asymmetric dependency between two variables $A$ and $B$.}
	\label{figure:synthetic_data}
\end{figure}
From this construction, it follows that if $A$ is High then $B$ is High, but not necessarily the other way around. Moreover, no other rules are expected to hold significance.}

\New{
We now examine how support and confidence evolve for operator pairs exhibiting non-commutative behavior. Figure~\ref{fig:fourgrid} shows that, for $(T_{\lambda}^{\bm{SS}},I^{\bm{SS}}_{k_{\lambda}})$, support and confidence remain nearly invariant, with both rules showing very similar support. Although $A$ High $\to$ $B$ High achieves slightly higher confidence than the reverse, the difference is negligible. Hence, these operators, despite being non-commutative, are too restrictive to capture the intended directional pattern. In contrast, for $(\TLK,I_p)$ the behavior changes substantially. For $p<1$, support and confidence diverge clearly between the two rules, correctly highlighting the true dependency. For $p>1$, however, the opposite occurs: the rule $B$ High $\to$ $A$ High gains inflated confidence due to overemphasizing the antecedent, which is frequently satisfied when $B$ is High.
\begin{figure}[ht]
	\centering
	\begin{subfigure}[b]{\linewidth}
		\centering
		\includegraphics[width=0.48\linewidth]{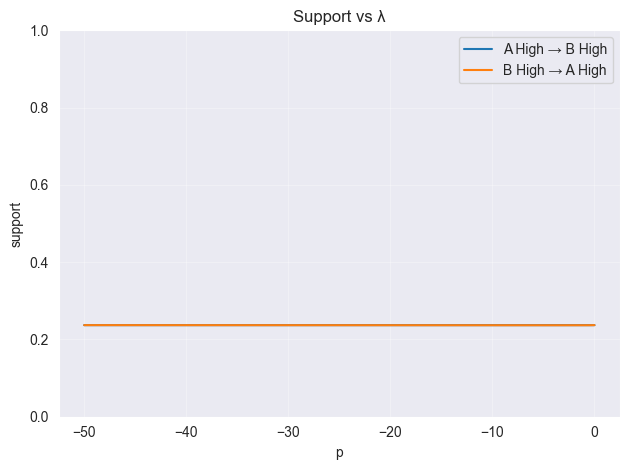}%
		\hfill
		\includegraphics[width=0.48\linewidth]{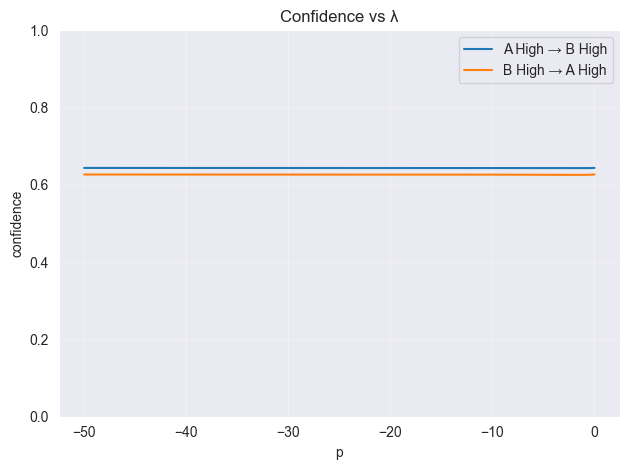}
		\subcaption{$(T_{\lambda}^{\bm{SS}},I^{\bm{SS}}_{k_{\lambda}})$.}
		\label{fig:rowsub:1}
	\end{subfigure}
	\medskip
	\begin{subfigure}[b]{\linewidth}
		\centering
		\includegraphics[width=0.48\linewidth]{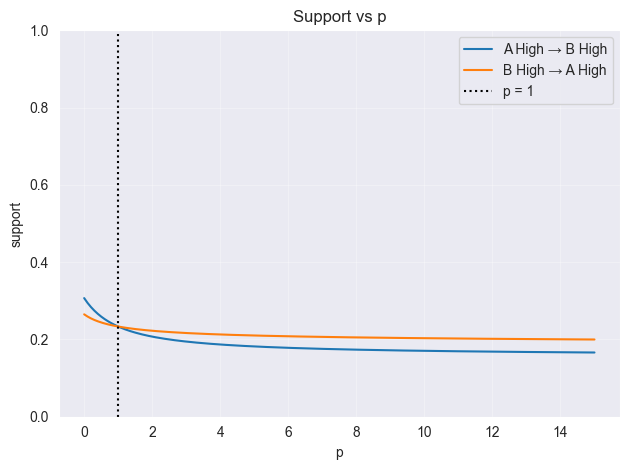}%
		\hfill
		\includegraphics[width=0.48\linewidth]{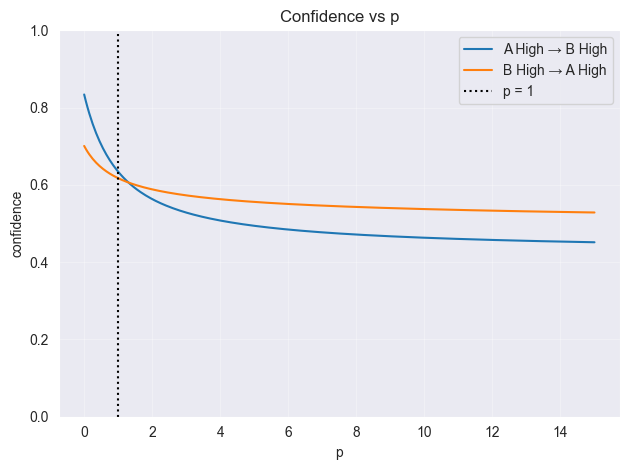}
		\subcaption{$(\TLK,I_p)$.}
		\label{fig:rowsub:2}
	\end{subfigure}
	\caption{Evolution of support and confidence for the rules $A$ High $\to$ $B$ High and $B$ High $\to$ $A$ High as a function of the parameters $\lambda$ and $p$ of the pairs $(T_{\lambda}^{\bm{SS}},I^{\bm{SS}}_{k_{\lambda}})$ and $(\TLK,I_p)$, respectively.}
	\label{fig:fourgrid}
\end{figure}
}

\New{For comparison, we also computed support and confidence for fuzzy association rules using both the minimum and product t-norms, as well as for the crisp case. The results are reported in Table~\ref{table:commutative_case}. Across all three settings, the values are nearly identical and do not distinguish between the two rules. As expected, support remains identical for both rules in all cases.
\begin{table}[ht]
	\centering
	\caption{Support and confidence values for the rules $A$ High $\to$ $B$ High and $B$ High $\to$ $A$ High under different operator choices: product t-norm with Yager implication $(T_P,I_Y)$, Łukasiewicz t-norm with Łukasiewicz implication $(T_{LK},I_{LK})$, and the crisp case.}\label{table:commutative_case}
	\begin{tabular}{c|c|c|}
		\cline{2-3}
		& $A$ High $\to$ $B$ High & $B$ High $\to$ $A$ High \\ \hline
		\multicolumn{1}{|c|}{$\TP$} & (0.23,0.63) &  (0.23,0.62) \\ \hline
		\multicolumn{1}{|c|}{$\TM$} & (0.24,0.64) & (0.24,0.63) \\ \hline
		\multicolumn{1}{|c|}{Crisp} & (0.23,0.64) & (0.23,0.62) \\ \hline
	\end{tabular}
\end{table}}
\New{In summary, we have shown that for recovering the true asymmetric dependency, we require operators with non-symmetric support and explicit control over the consequents’ contribution. Both properties become available through the introduction of fuzzy implication functions.}

\subsection{Similarity between models}

\New{To assess whether fuzzy implicative rules yield knowledge that differs from other approaches, we conducted experiments on seven benchmark datasets from the UCI repository \cite{Dua2019}: \textit{iris}, \textit{wdbc}, \textit{vehicle}, \textit{abalone}, \textit{magic}, \textit{onlinenews} and \textit{globalhousing}. These datasets include both numerical and categorical variables. Unless specified otherwise, for the numerical variables, we constructed fuzzy partitions consisting of three triangular fuzzy sets defined according to the variable quantiles. For the categorical variables, we used crisp fuzzy partitions aligned with the number of categories. For every database, we have selected 0.3 as the minimum fuzzy coverage and support and 0.8 as the minimum fuzzy confidence, as we found it is a good balance between quality and number of rules (see Figure \ref{fig:sensitive_analysis}). As baselines for non-implicative fuzzy rule mining, we use the minimum \TM and the product \TP t-norms; for our proposed $(T,I)$ semantics we consider $(T_{\lambda}^{\bm{SS}}, I^{\bm{SS}}_{k_{\lambda}})$ with $\lambda=-10$ and $(\TLK, I_p)$ with $p=0.01$. The results can be found in Table \ref{table:results1}.}

\begin{figure}[t]
  \centering
  \begin{subfigure}{0.48\linewidth}
    \includegraphics[width=\linewidth]{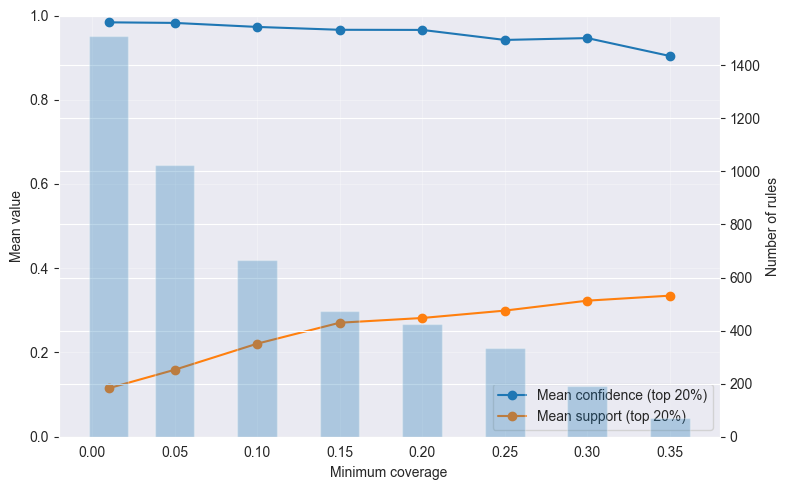}
    \caption{iris}
    \label{fig:rules_cov_dsA}
  \end{subfigure}\hfill
  \begin{subfigure}{0.48\linewidth}
    \includegraphics[width=\linewidth]{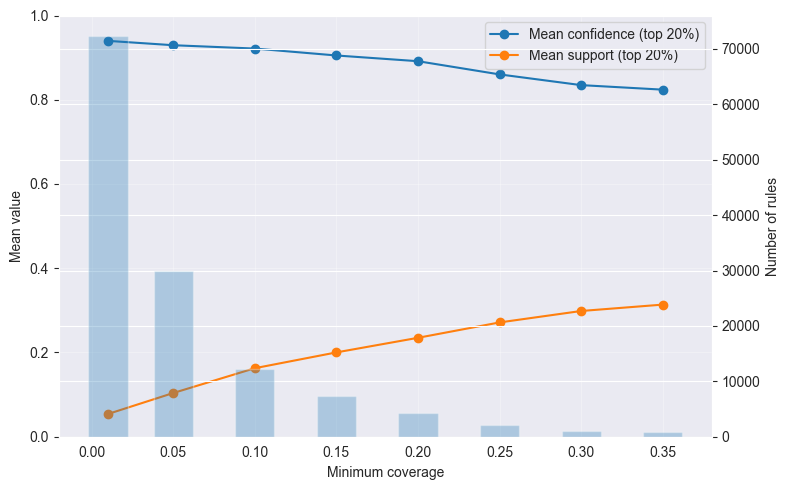}
    \caption{wdbc}
    \label{fig:rules_cov_dsB}
  \end{subfigure}
  \caption{Effect of minimum fuzzy coverage on rule quality and set size (top 20\% of rules by confidence). Left axis: mean confidence and mean support; right axis: number of rules. Fuzzy operators were fixed to $(\ILK,I_p)$ with $p=0.01$.}
  \label{fig:sensitive_analysis}
\end{figure}

\begin{table}[!htbp]
\caption{Results per dataset and model. In each cell, it appears the number of rules (top) and a tuple with the mean of the four quality measures Coverage, Support and Confidence. Also, under the name of each dataset there is a summary of its specifications: (samples, features).} \label{table:results1}
	\centering
	\setlength\tabcolsep{7pt}
	\renewcommand{\arraystretch}{2.1} \large
	\resizebox{9cm}{!}{
		\begin{tabular}{|c|c|c|c|c|c|}
			\hline
			\diagbox[width=10em]{\textbf{Dataset}}{\textbf{Model}}   
                & $\TP$ 
                & $\TM$ 
                & $(T_{-10}^{\bm{SS}},I^{\bm{SS}}_{k_{-10}})$ 
                & $(\TLK,I_{0.01})$
                & \textbf{Crisp} \\ \hline
\multicolumn{1}{|c|}{\begin{tabular}[c]{@{}c@{}}\textbf{iris}\\ (150,4) \end{tabular}}    
    & \multicolumn{1}{c|}{\begin{tabular}[c]{@{}c@{}}17\\ (0.35,0.32,0.92)\end{tabular}}   
    & \multicolumn{1}{c|}{\begin{tabular}[c]{@{}c@{}}19\\ (0.35,0.32,0.92)\end{tabular}}   
    & \multicolumn{1}{c|}{\begin{tabular}[c]{@{}c@{}}15\\ (0.35,0.32,0.92) \end{tabular}}  
    & \multicolumn{1}{c|}{\begin{tabular}[c]{@{}c@{}}32\\ (0.35,0.33,0.96) \end{tabular}}  
    & \multicolumn{1}{c|}{\begin{tabular}[c]{@{}c@{}}39\\ (0.33,0.31,0.95) \end{tabular}}  \\ \hline
\multicolumn{1}{|c|}{\begin{tabular}[c]{@{}c@{}}\textbf{wdbc}\\ (569,11) \end{tabular}}    
    & \multicolumn{1}{c|}{\begin{tabular}[c]{@{}c@{}}49\\ (0.36,0.32,0.9)\end{tabular}} 
    & \multicolumn{1}{c|}{\begin{tabular}[c]{@{}c@{}}51\\ (0.36,0.34,0.93) \end{tabular}} 
    & \multicolumn{1}{c|}{\begin{tabular}[c]{@{}c@{}}39\\ (0.37,0.32,0.88) \end{tabular}} 
    & \multicolumn{1}{c|}{\begin{tabular}[c]{@{}c@{}}84\\ (0.36,0.34,0.93) \end{tabular}} 
    & \multicolumn{1}{c|}{\begin{tabular}[c]{@{}c@{}}67\\ (0.37,0.34,0.93) \end{tabular}}  \\ \hline
\multicolumn{1}{|c|}{\begin{tabular}[c]{@{}c@{}}\textbf{vehicle}\\ (846,19) \end{tabular}}    
    & \multicolumn{1}{c|}{\begin{tabular}[c]{@{}c@{}} 101\\ (0.35,0.32,0.9) \end{tabular}}  
    & \multicolumn{1}{c|}{\begin{tabular}[c]{@{}c@{}}  118\\ (0.35,0.32,0.93)  \end{tabular}}  
    & \multicolumn{1}{c|}{\begin{tabular}[c]{@{}c@{}} 90\\ (0.36,0.31,0.89)  \end{tabular}}   
    & \multicolumn{1}{c|}{\begin{tabular}[c]{@{}c@{}} 262\\ (0.36,0.33,0.93)  \end{tabular}} 
    & \multicolumn{1}{c|}{\begin{tabular}[c]{@{}c@{}}294\\ (0.34,0.32,0.94) \end{tabular}}  \\ \hline
\multicolumn{1}{|c|}{\begin{tabular}[c]{@{}c@{}}\textbf{abalone}\\ (4174,9) \end{tabular}} 
    & \multicolumn{1}{c|}{\begin{tabular}[c]{@{}c@{}}101\\ (0.36,0.31,0.87) \end{tabular}}  
    & \multicolumn{1}{c|}{\begin{tabular}[c]{@{}c@{}}117\\ (0.35,0.32,0.92) \end{tabular}}   
    & \multicolumn{1}{c|}{\begin{tabular}[c]{@{}c@{}}74\\ (0.36,0.31,0.89) \end{tabular}}  
    & \multicolumn{1}{c|}{\begin{tabular}[c]{@{}c@{}}163\\ (0.34,0.33,0.97) \end{tabular}} 
    & \multicolumn{1}{c|}{\begin{tabular}[c]{@{}c@{}}304\\ (0.34,0.32,0.94) \end{tabular}}  \\ \hline
\multicolumn{1}{|c|}{\begin{tabular}[c]{@{}c@{}}\textbf{magic}\\ (19020,11) \end{tabular}}   
    & \multicolumn{1}{c|}{\begin{tabular}[c]{@{}c@{}}7\\ (0.37,0.32,0.86)\end{tabular}}   
    & \multicolumn{1}{c|}{\begin{tabular}[c]{@{}c@{}}13\\ (0.37,0.33,0.87) \end{tabular}}  
    & \multicolumn{1}{c|}{\begin{tabular}[c]{@{}c@{}}5\\ (0.38,0.32,0.85) \end{tabular}}   
    & \multicolumn{1}{c|}{\begin{tabular}[c]{@{}c@{}}44\\ (0.37,0.34,0.91) \end{tabular}} 
    & \multicolumn{1}{c|}{\begin{tabular}[c]{@{}c@{}}16\\ (0.36,0.32,0.88) \end{tabular}}  \\ \hline
\multicolumn{1}{|c|}{\begin{tabular}[c]{@{}c@{}}\textbf{onlinenews}\\ (39644,59) \end{tabular}}   
    & \multicolumn{1}{c|}{\begin{tabular}[c]{@{}c@{}} 6585322 \\ (0.48,0.48,1) \end{tabular}}   
    & \multicolumn{1}{c|}{\begin{tabular}[c]{@{}c@{}} 6575127\\ (0.48,0.48,1)\end{tabular}}  
    & \multicolumn{1}{c|}{\begin{tabular}[c]{@{}c@{}} 6355817\\ (0.48,0.46,0.95) \end{tabular}}   
    & \multicolumn{1}{c|}{\begin{tabular}[c]{@{}c@{}}6650640\\ (0.48,0.48,1) \end{tabular}} 
    & \multicolumn{1}{c|}{\begin{tabular}[c]{@{}c@{}} 320054\\ (0.42,0.43,1) \end{tabular}}  \\ \hline
\multicolumn{1}{|c|}{\begin{tabular}[c]{@{}c@{}}\textbf{globalhousing}\\ (200000,24) \end{tabular}}    
    & \multicolumn{1}{c|}{\begin{tabular}[c]{@{}c@{}} 5424\\ (0.39,0.39,1) \end{tabular}}   
    & \multicolumn{1}{c|}{\begin{tabular}[c]{@{}c@{}} 5382\\(0.39,0.39,1) \end{tabular}}  
    & \multicolumn{1}{c|}{\begin{tabular}[c]{@{}c@{}} 5258\\ (0.40,0.38,0.96) \end{tabular}}   
    & \multicolumn{1}{c|}{\begin{tabular}[c]{@{}c@{}} 5847\\ (0.39,0.39,0.99) \end{tabular}} 
    & \multicolumn{1}{c|}{\begin{tabular}[c]{@{}c@{}} 549\\ (0.39,0.39,0.99) \end{tabular}}  \\ \hline
		\end{tabular}
	}
\end{table}

\New{Overall, the fuzzy models ($\TP$, $\TM$, $(T_{-10}^{\bm{SS}}, I^{\bm{SS}}_{k_{-10}})$, and $(\TLK, I_{0.01})$) exhibit very similar behavior in terms of the quality measures Coverage, Support, and Confidence, which generally remain stable around 0.35, 0.32, and above 0.9, respectively. The main differences lie in the number of extracted rules: $(\TLK, I_{0.01})$ tends to produce the largest rule bases, particularly in smaller datasets such as \textit{iris} and \textit{vehicle}, while maintaining high confidence values. For larger datasets, such as \textit{onlinenews} and \textit{globalhousing}, all fuzzy models scale in a similar manner, generating thousands or even millions of rules with high confidence (close to~1) and moderate coverage and support.   Regarding the crisp case, it generally produces a higher number of rules than the fuzzy approaches in small and medium-sized datasets, while preserving similar quality levels. However, for the largest datasets, the number of extracted crisp rules is considerably lower, suggesting a more restrictive or selective rule generation process.}

\New{Nonetheless, it is hard to get an idea of how different the outputs are without examining each set of output rules. Also, since all the measures introduced in Section \ref{subsec:measures} are defined in terms of the pair $(T,I)$, it is not adequate to compare any of these measures when different pairs $(T,I)$ are considered. To overcome this issue, we propose a metric for comparing two sets of fuzzy rules using the Jaccard similarity index \cite{Tan2018}. Specifically, if $SR_1 = \{R_{1,1},\dots,R_{1,k}\}$ and $SR_2=\{R_{2,1},\dots,R_{2,k}\}$ are two sets of fuzzy rules, we define the percentage of similarity as
$SR_2=\{R_{2,1},\dots,R_{2,k}\}$ are two sets of fuzzy rules, we define the percentage of similarity as
$$ Similarity(SR_1,SR_2)=\frac{|SR_1 \cap SR_2|}{|SR_1 \cup SR_2|} \cdot 100,$$
where we have considered that $R_{1,l} = R_{2,\tilde{l}}$ if they involve the same target class and the same linguistic labels of the feature variables. The resulting similarity measure can thus be applied to compare rule sets induced with different operators or fuzzy partitions, and it seamlessly accommodates the crisp case. Figure~\ref{figure:results2} reports the mean pairwise similarity among the rule sets obtained in Table~\ref{table:results1} for the models under consideration.}
\begin{figure}[!htbp]
	\centering
	\includegraphics[scale=0.4]{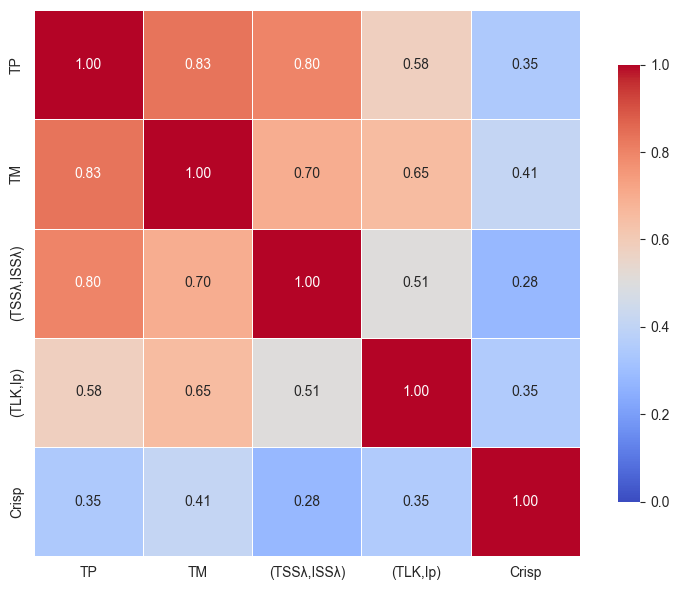}
	\caption{Mean similarity between the set of rules obtained in Table \ref{table:results1} for different models.} \label{figure:results2}
\end{figure}
\New{In Figure \ref{figure:results2} we can observe to main patterns. First, \TP, \TM, and \((T_{-10}^{\bm{SS}}, I^{\bm{SS}}_{k_{-10}})\) show the largest mutual overlaps, despite \((T_{-10}^{\bm{SS}}, I^{\bm{SS}}_{k_{-10}})\) being non-symmetric. Quantitatively, the off-diagonal similarities are high for \TP--\TM (0.83) and \TP--\((T_{-10}^{\bm{SS}}, I^{\bm{SS}}_{k_{-10}})\) (0.8), and moderate for \TM--\((T_{-10}^{\bm{SS}}, I^{\bm{SS}}_{k_{-10}})\) (0.7). This suggests that, under our thresholds, these three settings recover a largely overlapping core of rules—even when asymmetry is present. A plausible explanation is the structure of \((T_{-10}^{\bm{SS}}, I^{\bm{SS}}_{k_{-10}})\), for which the support saturates to~1 on a specific region of the domain (see Eq.~(\ref{eq:mtc_kimpl})). Second, \((\TLK, I_p)\) stands apart: its similarity with the others is only moderate (0.58 with \TP, 0.65 with \TM) and lowest against \((T_{-10}^{\bm{SS}}, I^{\bm{SS}}_{k_{-10}})\) (0.51), consistent with its explicitly implicative, directional semantics.  Crisp behaves differently for a fundamental reason: the quality metrics (coverage, support, confidence) are computed on hard sets and do \emph{not} account for fuzzy reasoning (no partial memberships or generalized modus ponens). Consequently, its rule set reflects a different inductive bias. This is visible in its low overlap with the fuzzy variants (0.35 with \TP, 0.41 with \TM, 0.28 with \((T_{-10}^{\bm{SS}}, I^{\bm{SS}}_{k_{-10}})\), and 0.35 with \((\TLK, I_p)\)); its closest match is \TM, which is more ``crisp-like'' in conjunction evaluation. Overall, Figure~\ref{figure:results2} indicates a shared rule core among \TP, \TM, and the (non-symmetric) \((T_{-10}^{\bm{SS}}, I^{\bm{SS}}_{k_{-10}})\), while the implicative \((\TLK, I_p)\) contributes distinct rule sets—retrieving information than other fuzzy models, as we discussed also in Section \ref{subsec:directional_patterns}.}


\section{Conclusions}\label{sec:conclusions}

\New{In this paper, we have revisited the concept of fuzzy implicative rules, and we have provided a solid theoretical framework for any algorithm interested in mining fuzzy rules modeled as logical conditionals. We have proved that for important characteristics like a monotone support or being the generalization of other frameworks, it is necessary to introduce a new property which we have called the monotonicity of the generalized modus ponens or \MTC. We have studied in depth this new property and we have provided different valid solutions. Although there are plenty of fuzzy implication functions available in the literature, we have shown that finding meaningful solutions remains a challenging problem, especially if the user wants to model the evaluation of a rule as a non-symmetric quantity. Also, we disclose that many pairs of operators lead to a symmetric expression of the generalized modus ponens, even if the fuzzy implication function involved is non-commutative. Further, although for some operators the support is non-symmetric, the set of rules obtained may not be much different from symmetric approaches. This shows the importance of the selection of the fuzzy operators depending on the patterns to be captured by the fuzzy rules.}

\New{Further, we have developed an open-source Python implementation of our framework for mining fuzzy implicative rules, proving that our approach yields distinct insights by introducing a novel similarity measure. As future work, we plan to further investigate the \MTC property to design more diverse fuzzy operators for modeling fuzzy implicative rules. In addition, we aim to incorporate fuzzy implicative rules into other rule-mining techniques such as subgroup discovery and exception rules. More broadly, the role of fuzzy implication functions in directional fuzzy rules deserves deeper exploration, for example within fuzzy rule-based approaches like Mamdani or TSK systems. Finally, we intend to develop heuristics and more specific optimization techniques to enhance the efficiency of mining fuzzy implicative rules.}


\bibliographystyle{IEEEtran}
\bibliography{arxiv}

\end{document}